\newcolumntype{P}[1]{>{\centering\arraybackslash}p{#1}}
\newtheorem{prop}{Proposition}
\newtheorem{cor}{Corollary}
\newcommand\blfootnote[1]{%
  \begingroup
  \renewcommand\thefootnote{}\footnote{#1}%
  \addtocounter{footnote}{-1}%
  \endgroup
}
\date{} 
\providecommand{\keywords}[1]
{
  \small	
  \textbf{\textit{Keywords---}} #1
}
\author{
Xiang Cui and  Alexandra Chronopoulou\\ \smallskip 
University of Illinois at  Urbana-Champaign.}
\title{Optimal Sampling for Estimation of Fractional Brownian Motion}
\begin{document} 
\maketitle

\begin{abstract}
In this paper, we focus on multiple sampling problems for the estimation of the fractional Brownian motion when the maximum number of samples is limited, extending existing results in the literature in a non-Markovian framework. Two classes of sampling schemes are proposed: a deterministic scheme and a level-triggered scheme. For the deterministic sampling scheme, the sampling times are selected beforehand and do not depend on the process trajectory. For the level-triggered sampling scheme, the sampling times are the times when the process crosses predetermined thresholds. The sampling times are selected sequentially in time and depend on the process trajectory. For each of the schemes, we derive the optimal sampling times by minimizing the aggregate squared error distortion. We then show that the optimal sampling strategies heavily depend on the dependence structure of the process.
\end{abstract} \hspace{12pt}

\keywords{Fractional Brownian motion; Multiple sampling; Stochastic optimization.}

\blfootnote{Xiang Cui is Ph.D. Student (E-mail: xiangc5@illinois.edu), and
Alexandra Chronopoulou is Clinical Associate Professor (E-mail: achronop@illinois.edu), Department
of Statistics, University of Illinois at Urbana-Champaign, Champaign, IL 61820.}
\section{Introduction}
In signal processing, the main goal is to model and analyze data representations of physical events. One of the fundamental questions in this area is how to estimate and predict the signal given its behavior in the past. From a statistical point of view, the question translates to sampling the process of interest, either offline or in real-time, and applying statistical techniques to predict its future values. However, sampling is not a trivial task and it is highly dependent on the underlying model, whether continuous or discrete, the sampling frequency, and the sampling mechanism. Furthermore, all these attributes are often bound by restrictions or limitations due to the physical problem or setup, such as the cost for sampling.

In this paper, we focus on problems where the underlying process is continuous, and there are restrictions on the number of samples that we are allowed to obtain. Of course, this problem is not new. For the networked control systems (NCSs) \citep{murray2003future, hespanha2007survey}, sensors are used to transmit samples to the supervisors, which can provide decision and control. NCSs have been widely used in various areas such as mobile sensor networks, manufacturing systems, and remote surgery. Since the sensors can only transmit a limited number of samples, the design for the sampling time can affect the estimation quality of the signal. The traditional design for the sampling time is to sample the signals equidistantly or periodically in time \citep{aastrom2013computer}, which is conventionally referred to as Riemann sampling or deterministic sampling. \citet{Kushner1964} has treated a linear optimal control problem in a finite horizon using a fixed number of deterministic sampling times. \citet{Cambanis1983} studied the best deterministic and random sampling methods for detecting  a signal in noise. However, in their approach, the sampling methods are not dependent on the signal trajectory observed by the sensor. There are several alternatives to deterministic sampling, such as sampling the system when the system has changed by a specific amount. This type of sampling is called Lebesgue sampling, event-triggered sampling, or level-triggered sampling. \citet{astrom2002} used simple systems to compare deterministic and event-triggered sampling, showing that event-triggered sampling has better performance. \citet{imer2005optimal} considered an optimal estimation problem with limited measurements, in which the stochastic process was a scalar linear system. They showed that the optimal observer policy has a solution depending on the event-triggered sampling scheme. \citet{rabi2006multiple} developed optimal multiple sampling schemes for the Brownian motion. The event-triggered approach was also used to trigger the data transmission from a sensor to a remote observer to minimize the mean squared estimation error at the observer, subject to a constraint on transmission frequency \citep{li2010event}. 

In this paper, our goal is to characterize the performance gained from different types of sampling methods when estimating a fractional process, and specifically a fractional Brownian motion (fBm), extending the results in \citet{rabi2006multiple} in a non-Markovian framework.  A fractional Brownian motion is an extension of the standard Brownian motion in the sense that it is still a Gaussian process but with a very rich dependence structure, that is determined by the Hurst parameter $H$. For a mathematical definition of the process, we refer the readers to \citet{mandelbrot1968fractional}. In the literature, the fBm has been used to model phenomena that exhibit long-range dependence, which intuitively means that the last observation is strongly correlated to the first one. In the literature, fractional Brownian motion has been used to model phenomena in economics, traffic networks, and image processing that exhibit this long memory behavior \citep{beran1994statistics, chronopoulou2012estimation, comte1998long, hurst1951long}. 

In this work, we want to study two different sampling mechanisms, deterministic and level-triggered, when estimating the fractional Brownian motion based on a sequence of discrete samples where the maximum number of samples is limited. The quality of the estimators will be assessed using the distortion of a real-time estimator of the signal over a finite horizon. Specifically, we solve a sampling design problem for estimating the fractional Brownian motion within deterministic sampling and level-triggered sampling. In deterministic sampling, the sampling time sequence is selected beforehand, and it is independent of the fBm trajectory. We illustrate the method and show how to find the optimal sampling time sequence in both one sample case and multiple samples case. We show that the optimal sampling time sequence is dependent on the Hurst parameter of the fBm. In the level-triggered sampling, the sampling time sequence is selected sequentially in time, and is dependent on the fBm trajectory. The actual sampling times are the times when the
change of the fBm trajectory crosses predetermined thresholds. We illustrate how to model this sampling strategy and to find the optimal thresholds numerically in both one sample case and multiple samples case. Based on our results, the optimal thresholds are also dependent on the Hurst parameter of the fBm. 

The remainder of this paper is organized as follows. Section \ref{section:formulation} gives the formulation and the mathematical background of the problem. Section \ref{section:deterministic} shows the proposed deterministic sampling method for one sample case and multiple samples case. Section \ref{section:level-triggered} shows the proposed level-triggered sampling method for one sample case and multiple samples case. Section \ref{section:conclusion} concludes the paper with a discussion.

\section{Problem Formulation}
\label{section:formulation}

Consider a signal $x_t$, $t \in [0, T]$, that we want to estimate. A sensor can observe the state of the signal and transmit observations at times it chooses to a supervisor. However, it can only generate at most $N$ samples to be transmitted. The sampling times $S = \{\tau_1, \tau_2, \ldots, \tau_N \}$ are an increasing sequence of times with respect to the process $x_t$ and lie within $[0,T]$.

Given the samples transmitted by the sensor at the sampling times, the least-square estimate for the state $\hat{x}_t$ at the supervisor is given by:
\begin{align*}
\hat{x}_t =
 \begin{cases}
      \mathbb{E}[x_t|\mathcal{F}_0] & \text{if $0 \leq t < \tau_1$},\\
      \mathbb{E}[x_t|\mathcal{F}_{\tau_i}] & \text{if $\tau_i \leq t < \tau_{i+1},$ for $i = 1, \ldots, N-1$},\\
      \mathbb{E}[x_t|\mathcal{F}_{\tau_N}] & \text{if $\tau_N \leq t \leq T$},
    \end{cases} 
\end{align*}
where $\mathcal{F}_t$ is the filtration generated by the values sampled up to time $t$. The estimation quality is measured by the aggregate squared error distortion defined as:
\begin{align}
J(S) &= \mathbb{E}\left[\int_0^T(x_s-\hat{x}_s)^2ds\right]\\
&= \mathbb{E}\left[\int_0^{\tau_1}(x_s-\hat{x}_s)^2ds+\sum_{i=2}^N \int_{\tau_{i-1}}^{\tau_i}(x_s-\hat{x}_s)^2ds + \int^{T}_{\tau_{N}}(x_s-\hat{x}_s)^2ds\right] \label{eq:1}.
\end{align}
In this paper, we assume that $\{x_t = B_t^H;t\ge 0\}$ is a fractional Brownian motion and consider two sampling strategies: (i) deterministic sampling; (ii) level-triggered sampling. The deterministic sampling refers to the case in which the sampling time sequence $S$ is chosen beforehand, and it is independent of the process trajectory. The sequence is chosen to minimize the expected error distortion $J$ in Equation \eqref{eq:1}. In this case, the supervisor knows the time points that the sensor will transmit samples of the process. In the case of level-triggered sampling, the sensor transmits samples of the process when the change of the process $x_t$ crosses predetermined thresholds, which implies that the sampling times depend on the actual process trajectory. 

\subsection{Mathematical  background}

As we discussed before, the fractional Brownian motion has been used to model phenomena that exhibit long memory and was first proposed by \citet{mandelbrot1968fractional}. Specifically, a fractional Brownian motion $\{x_t = B_t^H;t\ge 0\}$ with a Hurst parameter $H \in (0,1)$ is a centered, continuous,  Gaussian process with covariance structure:
\begin{align*}
\text{Cov}(B_t^H, B_s^H) = \mathbb{E}(B_t^HB_s^H) = \frac{1}{2}(t^{2H}+s^{2H}- |t-s|^{2H}), \quad t, s \ge 0.
\end{align*}
When $H = \frac{1}{2}$, the process is a standard Brownian motion. However, fBm is not a semi-martingale when $H \ne \frac{1}{2}$. Specifically, when $H \ne \frac{1}{2}$, the increments of the fBm at disjoint intervals $\{B_n^H - B_{n-1}^H\}_{n =  1, 2, 3, \ldots}$ are correlated, and their correlation is
\begin{align*}
\rho_H(n) = \frac{1}{2}((n+1)^{2H} + (n-1)^{2H} -2n^{2H}).
\end{align*}
When $H > \frac{1}{2}$, the increments at disjoint intervals are positively correlated and the process exhibits long-range dependency, in the sense that $\sum_{n=1}^{\infty} \rho_H(n) = \infty$. When $H < \frac{1}{2}$, the increments at disjoint intervals are negatively correlated and the process exhibits short memory, in the sense that 
$\sum_{n=1}^{\infty} |\rho_H(n)| < \infty$, also called a rough behavior. For more properties of fBm, we refer the readers to \citet{mandelbrot1968fractional} and \citet{nualart2006malliavin}.

\section{Deterministic Sampling}
\label{section:deterministic}
\subsection{One sample case}

In the deterministic framework, the sampling sequence is selected beforehand and hence it is independent of the trajectory of $B_t^H$. In order to illustrate the method, we first discuss the single sample case, where we can only obtain one observation at time $t = \tau_1$. 

The least square estimate for $t<\tau_1$ is: 
\[\hat{B}_t^H = \mathbb{E}[B_t^H|\mathcal{F}_0] = 0.\]
For $t \ge \tau_1$, the least square estimate is:
\begin{align*}
\hat{B}_t^H = \mathbb{E}[B_t^H|\mathcal{F}_{\tau_1}] = \mathbb{E}[B_t^H|B_{\tau_1}^H] .
\end{align*}
To compute this conditional expectation, it suffices to recall that $B_t^H \sim N(0, t^{2H})$ with auto-correlation function
\[\rho(B_t^H, B_{\tau_1}^H) = \frac{\text{Cov}(B_t^H, B_{\tau_1}^H)}{\sqrt{t^{2H}}\sqrt{\tau_1^{2H}}}.\]
For simplicity in the notation, we will refer to $\rho(B_t^H, B_{\tau_1}^H)$ by $\rho$. Given the expression for $\rho$ and the property of the Gaussian process, we can calculate the predicted value $B_t^H$ given the sample at time $\tau_1$ by:
\begin{align*}
\hat{B}_t^H = \mathbb{E}[B_t^H|B_{\tau_1}^H] = B_{\tau_1}^H \cdot \rho \cdot \frac{\sqrt{t^{2H}}}{\sqrt{\tau_1^{2H}}} = \frac{B_{\tau_1}^H}{2\tau_1^{2H}}(t^{2H}+\tau_1^{2H}-|t-\tau_1|^{2H}).
\end{align*}
Therefore, we have the following result regarding the aggregate squared error distortion.
\begin{prop}
\label{prop:deter-one-sample}
When the underlying process is a fractional Brownian motion with Hurst parameter $H\in(0,1)$, the aggregate squared error distortion with a sample obtained at time $t=\tau_1$ is given by
\begin{align}
\label{deter:distortion1}
J(\tau_1) &= \frac{T^{2H+1}}{2H+1} - \frac{1}{4\tau_1^{2H}}\int_{\tau_1}^{T} (s^{2H}+\tau_1^{2H}-|s-\tau_1|^{2H})^2ds.
\end{align}
\end{prop}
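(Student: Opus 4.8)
The plan is to evaluate $J(\tau_1)$ directly from its definition in Equation \eqref{eq:1}, specialized to $N=1$. With a single sample the estimator takes only two forms, $\hat{B}_s^H = 0$ on $[0,\tau_1)$ and $\hat{B}_s^H = \mathbb{E}[B_s^H\mid B_{\tau_1}^H]$ on $[\tau_1,T]$, so I would split the time integral at $\tau_1$ and, since both integrands are nonnegative, interchange expectation and integration by Tonelli's theorem to obtain
\[
J(\tau_1) = \int_0^{\tau_1}\mathbb{E}\big[(B_s^H)^2\big]\,ds + \int_{\tau_1}^{T}\mathbb{E}\big[(B_s^H-\hat{B}_s^H)^2\big]\,ds.
\]
The first term is immediate from $\mathbb{E}[(B_s^H)^2]=s^{2H}$, giving $\tau_1^{2H+1}/(2H+1)$.

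For the second term, the key observation is that $B_s^H-\hat{B}_s^H$ is the least-squares prediction error of $B_s^H$ from $B_{\tau_1}^H$, and because $(B_s^H,B_{\tau_1}^H)$ is jointly Gaussian this error is independent of the conditioning variable with constant conditional variance. Hence $\mathbb{E}[(B_s^H-\hat{B}_s^H)^2] = \mathrm{Var}(B_s^H)\,(1-\rho^2) = s^{2H}(1-\rho^2)$, where $\rho=\rho(B_s^H,B_{\tau_1}^H)$ is the correlation already introduced before the proposition. I would then substitute $\rho^2 = \mathrm{Cov}(B_s^H,B_{\tau_1}^H)^2/(s^{2H}\tau_1^{2H})$ together with the fBm covariance $\mathrm{Cov}(B_s^H,B_{\tau_1}^H)=\tfrac12(s^{2H}+\tau_1^{2H}-|s-\tau_1|^{2H})$, which collapses the cross term to
\[
s^{2H}\rho^2 = \frac{1}{4\tau_1^{2H}}\big(s^{2H}+\tau_1^{2H}-|s-\tau_1|^{2H}\big)^2,
\]
so the integrand on $[\tau_1,T]$ becomes $s^{2H}$ minus this expression.

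Integrating over $[\tau_1,T]$ yields $\int_{\tau_1}^T s^{2H}\,ds = (T^{2H+1}-\tau_1^{2H+1})/(2H+1)$ minus the claimed integral term, and adding back the $[0,\tau_1)$ contribution cancels the $-\tau_1^{2H+1}/(2H+1)$ against the $+\tau_1^{2H+1}/(2H+1)$, leaving exactly $T^{2H+1}/(2H+1)$ in front of the subtracted integral, which is the stated formula \eqref{deter:distortion1}. I do not anticipate a genuine obstacle here, as every step is routine; the only point requiring care is the justification that the residual second moment equals the (constant) Gaussian conditional variance $s^{2H}(1-\rho^2)$ rather than something depending on the realized value of $B_{\tau_1}^H$, and the minor bookkeeping by which the two polynomial pieces recombine into $T^{2H+1}/(2H+1)$.
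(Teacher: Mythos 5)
Your proposal is correct and is essentially the paper's argument: both reduce $J(\tau_1)$ to $\int_0^T \mathbb{E}[(B_s^H)^2]\,ds - \int_{\tau_1}^T \mathbb{E}[(\hat B_s^H)^2]\,ds$ and then evaluate the second moment of the predictor explicitly. The only cosmetic difference is that you obtain the residual second moment $s^{2H}(1-\rho^2)$ from the bivariate Gaussian conditional-variance formula, whereas the paper cancels the cross term $-2\mathbb{E}[B_s^H\hat B_s^H]$ against $\mathbb{E}[(\hat B_s^H)^2]$ via the tower property; these are the same orthogonality fact and lead to the identical integrand.
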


\begin{proof}
The expression for the distortion is easily obtained by direct calculations using the properties of the fBm:
\begin{align}
J(\tau_1) & = \mathbb{E}\int_0^{T}(B^H_s)^2ds -2\mathbb{E} \int_{\tau_1}^{T}(B^H_s)(\hat{B}^H_s)ds + \mathbb{E} \int_{\tau_1}^{T}(\hat{B}^H_s)^2ds \nonumber\\
&= \mathbb{E}\int_0^{T}(B^H_s)^2ds -2\mathbb{E}[\int_{\tau_1}^{T} \mathbb{E} [B^H_s\hat{B}^H_s|   \mathcal{F}_{\tau_1}]ds]+ \mathbb{E} \int_{\tau_1}^{T}(\hat{B}^H_s)^2ds \nonumber\\
& =  \int_0^{T} \mathbb{E} (B^H_s)^2ds -  \int_{\tau_1}^{T} \mathbb{E}(\hat{B}^H_s)^2ds \nonumber\\
& = \frac{T^{2H+1}}{2H+1} - \frac{1}{4\tau_1^{2H}}\int_{\tau_1}^{T} (s^{2H}+\tau_1^{2H}-|s-\tau_1|^{2H})^2ds. \nonumber
\end{align}
\end{proof}
In this deterministic context, the optimal sampling time is defined as the $\hat{\tau}_1 \in [0, T]$  that minimizes the distortion $J(\tau_1)$ in Equation \eqref{deter:distortion1}. Since we can not get a closed form of the distortion $J(\tau_1)$ in Equation \eqref{deter:distortion1}, we need to use a numerical method (e.g., \citet{brent2013algorithms}) to find the optimal $\hat{\tau}_1$. In Table \ref{table-deter-one-sample}, we present the optimal sampling time $\hat{\tau}_1$'s in the single sample case for Hurst parameter $H = 0.1, 0.2, 0.3, \ldots, 0.9$ when $T = 20$. We also present the same results in  Figure \ref{fig1}.

\begin{table}[h]
\begin{center}
\begin{tabular}{| c | c | c | c | c | c | c | c | c |}
\hline
Hurst & $\hat{\tau}_1$ & Distortion & Hurst & $\hat{\tau}_1$ & Distortion & Hurst & $\hat{\tau}_1$ & Distortion \\
\hline
0.1 & 5.658 & 23.433 & 0.2 & 8.417 & 33.685 & 0.3 & 9.469 & 48.327 \\
\hline
0.4 & 9.889 & 69.544 & 0.5 & 10 & 100 & 0.6 & 9.903 & 142.530 \\
\hline
0.7 & 9.606 & 198.413 & 0.8 & 9.028 & 260.998 & 0.9 & 7.854 & 290.634\\
\hline
\end{tabular}
\caption{\label{table-deter-one-sample} Optimal sampling time $\hat{\tau}_1$'s and the corresponding optimal distortions in the single sample case for Hurst parameter $H = 0.1, 0.2, 0.3, \ldots, 0.9$ when $T = 20$.}
\end{center}
\end{table}

\begin{figure}[h]
\begin{center}
  \includegraphics[width=0.7\linewidth]{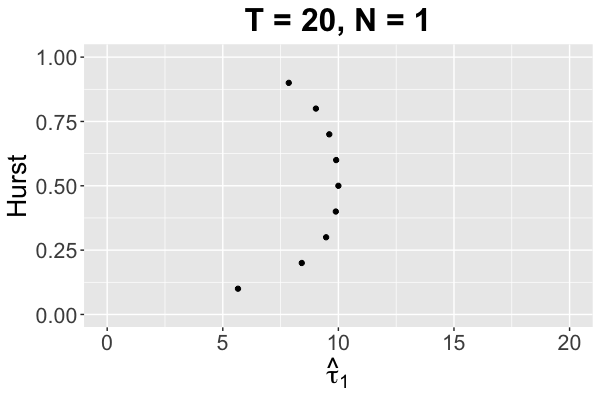}\par
  \caption{Optimal sampling time $\hat{\tau}_1$'s in the single sample case for  Hurst parameter $H = 0.1, 0.2, 0.3, \ldots, 0.9$ when $T = 20$.}
  \label{fig1}
\end{center}
\end{figure}

In Figure \ref{fig1}, the $y$-axis represents the Hurst parameter and $x$-axis represents the optimal sampling time in one sample case for the corresponding Hurst parameter. When $H = \frac{1}{2}$, the optimal sampling time $\hat{\tau}_1$ is $\frac{T}{2}$. When $H$ keeps further away from $H = \frac{1}{2}$, the optimal $\hat{\tau}_1$ becomes smaller. We recover the result of the Brownian motion in \citet{rabi2006multiple} and generalize it to fractional Brownian motion when $H \neq \frac{1}{2}$.

\subsection{Multiple samples case}

In this section, we generalize the single sample case result to $N$ samples. We compute the optimal deterministic sampling times  $0 \le \tau_1 \le \tau_2 \le \tau_3 \ldots \le \tau_N \le T$ in $[0, T]$.  For $\tau_i \le t < \tau_{i+1}$, the estimation for $\hat{B}_t^H$ is
\begin{align}
\hat{B}_t^H = \mathbb{E}(B_t^H|\mathcal{F}_{\tau_i}) = \mathbb{E}(B_t^H|B_{\tau_1}^H, B_{\tau_2}^H, \ldots, B_{\tau_i}^H),
\label{non-truncate-estimate}
\end{align}
where $\mathcal{F}_{\tau_i}$ is the filtration generated by the samples at time $\tau_1, \ldots, \tau_i$. Recall that 
\begin{align*}
(B^H_t, B^H_{\tau_1}, B^H_{\tau_2},\ldots, , B^H_{\tau_i})^T \sim N(0, \Sigma),
\end{align*}
where
\begin{align*}
\Sigma &= 
\begin{pmatrix*}
\text{Var}(B^H_t) & \text{Cov}(B^H_t, B^H_{\tau_1}) & \ldots & \text{Cov}(B^H_t, B^H_{\tau_i})\\
\text{Cov}(B^H_{\tau_1}, B^H_t) & \text{Var}(B^H_{\tau_1}) & \ldots & \text{Cov}(B^H_{\tau_1}, B^H_{\tau_i})\\
\vdots & \vdots & \ddots &\vdots \\
\text{Cov}(B^H_{\tau_i}, B^H_t) & \text{Cov}(B^H_{\tau_i}, B^H_{\tau_1}) & \ldots & \text{Var}(B^H_{\tau_i})
\end{pmatrix*} = \begin{pmatrix*}
\Sigma_{11} & \Sigma_{12}\\
\Sigma_{12} & \Sigma_{22}
\end{pmatrix*}
\end{align*}
and $\Sigma_{11} = \text{Var}(B^H_t)$.
Based on the property of the conditional distribution of the multivariate Gaussian distribution, we obtain
\begin{equation}
\hat{B}_t^H = \mathbb{E}B_t^H + \Sigma_{12}\Sigma_{22}^{-1}
(\begin{pmatrix}
B_{\tau_1}^H\\
B_{\tau_2}^H\\
\vdots \\
B_{\tau_i}^H
\end{pmatrix} - \mathbb{E}
\begin{pmatrix}
B_{\tau_1}^H\\
B_{\tau_2}^H\\
\vdots \\
B_{\tau_i}^H
\end{pmatrix}) = 
\Sigma_{12}\Sigma_{22}^{-1}
\begin{pmatrix}
B_{\tau_1}^H\\
B_{\tau_2}^H\\
\vdots \\
B_{\tau_i}^H
\end{pmatrix} \label{eq:2}
\end{equation}
and the following proposition regarding the aggregate squared error distortion follows.
\begin{prop}
When the underlying process is a fractional Brownian motion with Hurst parameter $H\in(0,1)$, the aggregate squared error distortion given $N$ samples obtained at times $\tau_1, \tau_2, \ldots, \tau_N$ is given by
\begin{align}
J(\tau_1, \tau_2, \ldots, \tau_N)  &=\frac{T^{2H+1}}{2H+1} - \sum_{i = 1}^N\int_{\tau_i}^{\tau_{i+1}} \mathbb{E}(\hat{B}_t^H)^2dt.
\label{eq:dis-non}
\end{align}
\end{prop}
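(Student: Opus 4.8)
The plan is to reproduce the orthogonality argument of Proposition~\ref{prop:deter-one-sample} on each subinterval of the partition, so that the only genuinely new ingredient is the bookkeeping of the blocks. First I would set $\tau_0 = 0$ and $\tau_{N+1} = T$, and recall that on each block $[\tau_i, \tau_{i+1})$, $i = 0, 1, \ldots, N$, the estimator is $\hat{B}_s^H = \mathbb{E}[B_s^H \mid \mathcal{F}_{\tau_i}]$ as in Equation~\eqref{non-truncate-estimate}, with the convention $\mathcal{F}_{\tau_0} = \mathcal{F}_0$ so that $\hat{B}_s^H \equiv 0$ on the initial block $[0, \tau_1)$. This lets me write the distortion from Equation~\eqref{eq:1} as the single sum
\begin{align*}
J(\tau_1, \ldots, \tau_N) = \sum_{i=0}^N \mathbb{E}\int_{\tau_i}^{\tau_{i+1}}(B_s^H - \hat{B}_s^H)^2\,ds.
\end{align*}

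Next I would treat one block at a time. Expanding the square gives $(B_s^H)^2 - 2 B_s^H \hat{B}_s^H + (\hat{B}_s^H)^2$, and since $\hat{B}_s^H$ is $\mathcal{F}_{\tau_i}$-measurable the tower property yields
\begin{align*}
\mathbb{E}\big[B_s^H \hat{B}_s^H\big] = \mathbb{E}\Big[\hat{B}_s^H\,\mathbb{E}[B_s^H \mid \mathcal{F}_{\tau_i}]\Big] = \mathbb{E}\big[(\hat{B}_s^H)^2\big],
\end{align*}
exactly as in the single-sample case. Hence the cross term cancels one copy of $\mathbb{E}(\hat{B}_s^H)^2$ and the integrand on each block collapses to $\mathbb{E}(B_s^H)^2 - \mathbb{E}(\hat{B}_s^H)^2$.

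Finally I would sum over the blocks. The terms $\mathbb{E}(B_s^H)^2 = s^{2H}$ reassemble into $\int_0^T s^{2H}\,ds = T^{2H+1}/(2H+1)$, while the remaining terms accumulate with a minus sign; the $i=0$ contribution to the subtracted sum vanishes because $\hat{B}_s^H \equiv 0$ there, leaving $\sum_{i=1}^N \int_{\tau_i}^{\tau_{i+1}}\mathbb{E}(\hat{B}_t^H)^2\,dt$, which is the claimed identity. Conceptually nothing beyond the $N=1$ case is required, so I do not expect a real obstacle: the argument rests entirely on the projection property $\mathbb{E}[B_s^H\hat{B}_s^H]=\mathbb{E}(\hat{B}_s^H)^2$ holding uniformly on every subinterval. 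The only point demanding care is verifying that the correct filtration $\mathcal{F}_{\tau_i}$ is used on $[\tau_i,\tau_{i+1})$ so that this step is legitimate block by block. I would leave $\mathbb{E}(\hat{B}_t^H)^2$ unexpanded, since the statement keeps it implicit through the Gaussian conditioning formula $\hat{B}_t^H = \Sigma_{12}\Sigma_{22}^{-1}(B_{\tau_1}^H,\ldots,B_{\tau_i}^H)^T$ of Equation~\eqref{eq:2}.
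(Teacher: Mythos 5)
Your proposal is correct and follows essentially the same route as the paper's proof: decompose over the blocks $[\tau_i,\tau_{i+1})$ with $\tau_0=0$, $\tau_{N+1}=T$, expand the square, use the $\mathcal{F}_{\tau_i}$-measurability of $\hat{B}_t^H$ and the tower property to reduce the cross term via $\mathbb{E}[B_t^H\hat{B}_t^H]=\mathbb{E}[(\hat{B}_t^H)^2]$, and note that the $i=0$ block contributes nothing to the subtracted sum since $\hat{B}_t^H\equiv 0$ there. No gaps; the argument matches the paper's.
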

\begin{proof}
\begin{align}
J(\tau_1, \tau_2, \ldots, \tau_N) &= \mathbb{E}[\sum_{i = 0}^N\int_{\tau_i}^{\tau_{i+1}}(B_t^H - \hat{B}_t^H)^2dt] \nonumber\\
&= \mathbb{E} \int_{0}^T(B_t^H)^2dt +  \mathbb{E} \sum_{i = 1}^N\int_{\tau_i}^{\tau_{i+1}} [(\hat{B}_t^H)^2-2(B_t^H\hat{B}_t^H)]dt \nonumber\\
&= \mathbb{E} \int_{0}^T(B_t^H)^2dt +  \sum_{i = 1}^N  \mathbb{E} \int_{\tau_i}^{\tau_{i+1}} [(\hat{B}_t^H)^2-2(B_t^H\hat{B}_t^H)]dt \nonumber\\
&= \mathbb{E} \int_{0}^T(B_t^H)^2dt - \sum_{i = 1}^N \mathbb{E} \int_{\tau_i}^{\tau_{i+1}} (\hat{B}_t^H)^2dt \nonumber\\
&= \frac{T^{2H+1}}{2H+1} - \sum_{i = 1}^N\int_{\tau_i}^{\tau_{i+1}} \mathbb{E}(\hat{B}_t^H)^2dt, \nonumber
\end{align}
where $\tau_0 = 0$ and $\tau_{N+1} = T$.
\end{proof}

As we can see in Equation \eqref{eq:dis-non}, the aggregate squared error distortion is still not an explicit formula and remains to be further analyzed in order to obtain the optimal sampling times. 

Before proceeding to the calculation of the optimal sampling times, we also investigate a truncated version of the problem. That is, instead of using all previous samples information like in Equation \eqref{non-truncate-estimate}, we estimate the value of $B_{t}^H$ using only the last observed value. 

In this case, the estimation $\hat{B}^H_t$ for $\tau_i < t < \tau_{i+1}$ becomes
\begin{align*}
\hat{B}_t^H =  \mathbb{E}(B_t^H|B^H_{\tau_i}) = \frac{B_{\tau_i}^H}{2\tau_i^{2H}}(t^{2H}+\tau_i^{2H}-|t-\tau_i|^{2H}).
\end{align*}
In the other words, we replaced the filtration $\mathcal{F}_{\tau_i}$ by $B_{\tau_i}^H$. We call this the truncated case for which we obtain the truncated formula for the squared error distortion.
\begin{prop}
When the underlying process is a fractional Brownian motion with Hurst parameter $H\in(0,1)$, the aggregate squared error distortion given $N$ samples obtained at times $\tau_1, \tau_2, \ldots, \tau_N$ with truncation is obtained by
\begin{align}
J(\tau_1, \tau_2, \ldots, \tau_N) &= \int_{0}^T\mathbb{E}(B_t^H)^2dt - \sum_{i = 1}^{N}\int_{\tau_i}^{\tau_{i+1}} \mathbb{E}(\hat{B}_t^H)^2dt\\
&= \frac{T^{2H+1}}{2H+1} -  \sum_{i = 1}^{N} \int_{\tau_i}^{\tau_{i+1}}\mathbb{E}(\frac{B_{\tau_i}^H}{2\tau_i^{2H}}(t^{2H}+\tau_i^{2H}-|\tau_i-t|^{2H}))^2dt, \label{two-sample-trunc-deter}
\end{align}
where 
\begin{align*}
\int_{\tau_i}^{\tau_{i+1}}\mathbb{E}(\frac{B_{\tau_i}^H}{2\tau_i^{2H}}(t^{2H}+\tau_i^{2H}-|\tau_i-t|^{2H}))^2dt = \frac{1}{4\tau_i^{2H}}\int_{\tau_i}^{\tau_{i+1}} (s^{2H}+\tau_i^{2H}-|s-\tau_i|^{2H})^2ds.
\end{align*}
\end{prop}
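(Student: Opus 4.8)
The plan is to mirror the proof of the preceding (non-truncated) proposition, the only change being that on each interval $[\tau_i,\tau_{i+1})$ the estimator is now the one-dimensional conditional expectation $\hat{B}_t^H = \mathbb{E}(B_t^H\mid B_{\tau_i}^H)$ rather than $\mathbb{E}(B_t^H\mid\mathcal{F}_{\tau_i})$. First I would partition $[0,T]$ into the subintervals $[\tau_i,\tau_{i+1})$ with the conventions $\tau_0=0$ and $\tau_{N+1}=T$, expand the integrand as $(B_t^H-\hat{B}_t^H)^2 = (B_t^H)^2 - 2B_t^H\hat{B}_t^H + (\hat{B}_t^H)^2$, and use linearity of expectation together with Fubini to move the expectation inside each integral. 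On the first piece $[0,\tau_1)$ the estimator vanishes, so that interval contributes only $\int_0^{\tau_1}\mathbb{E}(B_t^H)^2\,dt$.

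The key step is to show that the cross term cancels half of the square term on each interval. Since $\hat{B}_t^H=\mathbb{E}(B_t^H\mid B_{\tau_i}^H)$ is $\sigma(B_{\tau_i}^H)$-measurable for $t\in[\tau_i,\tau_{i+1})$, the tower property gives
\[
\mathbb{E}(B_t^H\hat{B}_t^H)=\mathbb{E}\big(\hat{B}_t^H\,\mathbb{E}(B_t^H\mid B_{\tau_i}^H)\big)=\mathbb{E}(\hat{B}_t^H)^2,
\]
so that $\mathbb{E}[(\hat{B}_t^H)^2-2B_t^H\hat{B}_t^H]=-\mathbb{E}(\hat{B}_t^H)^2$ on each interval. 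Summing over $i$ then yields $J=\int_0^T\mathbb{E}(B_t^H)^2\,dt-\sum_{i=1}^N\int_{\tau_i}^{\tau_{i+1}}\mathbb{E}(\hat{B}_t^H)^2\,dt$, which is the first displayed equality; using $\mathbb{E}(B_t^H)^2=t^{2H}$ evaluates the leading integral to $T^{2H+1}/(2H+1)$ and gives the second equality.

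It then remains to compute $\mathbb{E}(\hat{B}_t^H)^2$ in closed form. Substituting the explicit truncated estimator and using $\mathbb{E}(B_{\tau_i}^H)^2=\tau_i^{2H}$, the constant factor $\tau_i^{2H}/(2\tau_i^{2H})^2=1/(4\tau_i^{2H})$ factors out of the expectation, leaving $\mathbb{E}(\hat{B}_t^H)^2=\frac{1}{4\tau_i^{2H}}(t^{2H}+\tau_i^{2H}-|t-\tau_i|^{2H})^2$, and integrating over $[\tau_i,\tau_{i+1}]$ produces the stated identity for the individual terms. I do not expect a genuine obstacle: the computation is a direct adaptation of the single-sample case in Proposition~\ref{prop:deter-one-sample}, and the only point requiring a moment's care is recognizing that the projection identity $\mathbb{E}(B_t^H\hat{B}_t^H)=\mathbb{E}(\hat{B}_t^H)^2$ still holds when we condition on the single variable $B_{\tau_i}^H$ rather than on the full filtration $\mathcal{F}_{\tau_i}$ — which it does, precisely because $\hat{B}_t^H$ is by definition the orthogonal projection of $B_t^H$ onto $\sigma(B_{\tau_i}^H)$.
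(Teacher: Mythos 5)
Your proposal is correct and follows exactly the argument the paper uses for the non-truncated $N$-sample proposition (the paper in fact states the truncated version without a separate proof, since the only change is replacing conditioning on $\mathcal{F}_{\tau_i}$ by conditioning on $B_{\tau_i}^H$, which is precisely the adaptation you carry out). The tower-property cancellation of the cross term and the evaluation $\mathbb{E}(\hat{B}_t^H)^2=\frac{1}{4\tau_i^{2H}}(t^{2H}+\tau_i^{2H}-|t-\tau_i|^{2H})^2$ are both exactly as intended.
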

Investigating this version of the problem and comparing our results with the non-truncated case is important since it allows us to investigate the trade-off between computational complexity (carrying all the history of the signal) and accuracy (truncating the history of the signal).

\subsection{Computation of the optimal sampling times}

In this section, we use $N=2$ and $N=3$ as examples to illustrate how to find the optimal sets of sampling times $\tau_1, \tau_2, \ldots, \tau_N$ in $[0, T]$ for deterministic sampling in both truncated sampling case and non-truncated sampling case.

\subsubsection{Two samples case}

For $\tau_2 \le t < T$, based on Equation \eqref{eq:2}, the estimation $\hat{B}_t^H$ is
\begin{align*}
\hat{B}_t^H = C(\tau_1, \tau_2, H)(A_1(\tau_1, \tau_2, t, H)B_{\tau_1}^H - A_2(\tau_1, \tau_2, t, H)B_{\tau_2}^H),
\end{align*}
where $C(\tau_1, \tau_2, H)$, $A_1(\tau_1, \tau_2, t, H)$ and $A_2(\tau_1, \tau_2, t, H)$ are given in Section \ref{appendix:corollary1}. 

\begin{cor}
\label{cor-deter-2}
When the underlying process is a fractional Brownian motion with Hurst parameter $H\in(0,1)$, the aggregate squared error distortion for the two samples case is:
\begin{align*}
J(\tau_1, \tau_2) &=\frac{T^{2H+1}}{2H+1} - \frac{1}{4\tau_1^{2H}}\int_{\tau_1}^{\tau_2} (s^{2H}+\tau_1^{2H}-|s-\tau_1|^{2H})^2ds\\
& - C(\tau_1, \tau_2, H)^2 \int_{\tau_2}^{T}[A_1(\tau_1, \tau_2, t, H)^2\tau_1^{2H} + A_2(\tau_1, \tau_2, t, H)^2\tau_2^{2H}\\
& - A_1(\tau_1, \tau_2, t, H)A_2(\tau_1, \tau_2, t, H)(\tau_2^{2H}+\tau_1^{2H}- |\tau_2-\tau_1|^{2H})]dt.
\end{align*}
\end{cor}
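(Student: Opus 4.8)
The plan is to specialize the general $N$-sample distortion formula of Equation~\eqref{eq:dis-non} to the case $N=2$, adopting the conventions $\tau_0 = 0$ and $\tau_3 = T$, and then to evaluate the two surviving integrals $\int_{\tau_1}^{\tau_2}\mathbb{E}(\hat{B}_t^H)^2\,dt$ and $\int_{\tau_2}^{T}\mathbb{E}(\hat{B}_t^H)^2\,dt$ separately. This split is forced because the estimator $\hat{B}_t^H$ has a different functional form on each interval: on $[\tau_1,\tau_2)$ only the first sample has been transmitted, whereas on $[\tau_2,T)$ both samples are available. The leading term $\frac{T^{2H+1}}{2H+1}$ carries over unchanged, since it is exactly $\int_0^T \mathbb{E}(B_t^H)^2\,dt = \int_0^T t^{2H}\,dt$, and the $i=0$ contribution vanishes because $\hat{B}_t^H = 0$ for $t<\tau_1$.

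For the middle interval $\tau_1 \le t < \tau_2$, the filtration $\mathcal{F}_{\tau_1}$ is generated by $B_{\tau_1}^H$ alone, so the estimator coincides with the one-sample estimator already computed before Proposition~\ref{prop:deter-one-sample}. I would therefore simply reuse the identity $\mathbb{E}(\hat{B}_t^H)^2 = \frac{1}{4\tau_1^{2H}}(t^{2H}+\tau_1^{2H}-|t-\tau_1|^{2H})^2$, which follows from $\mathbb{E}(B_{\tau_1}^H)^2 = \tau_1^{2H}$. Integrating over $[\tau_1,\tau_2]$ reproduces the second term of the stated formula verbatim, with no new work required.

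The substantive step is the final interval $\tau_2 \le t < T$, where the estimator takes the two-sample form $\hat{B}_t^H = C(\tau_1,\tau_2,H)\big(A_1 B_{\tau_1}^H - A_2 B_{\tau_2}^H\big)$ obtained from Equation~\eqref{eq:2}, with the explicit coefficients $C, A_1, A_2$ recorded in Section~\ref{appendix:corollary1}. I would expand the square of this linear combination and take expectations term by term, noting that $C, A_1, A_2$ are deterministic functions of $\tau_1,\tau_2$ (and $t$) so the expectation acts only on the Gaussian vector $(B_{\tau_1}^H, B_{\tau_2}^H)$. Using $\mathbb{E}(B_{\tau_j}^H)^2 = \tau_j^{2H}$ and $\text{Cov}(B_{\tau_1}^H, B_{\tau_2}^H) = \frac{1}{2}(\tau_1^{2H}+\tau_2^{2H}-|\tau_2-\tau_1|^{2H})$, the cross term $-2A_1 A_2\,\text{Cov}(B_{\tau_1}^H,B_{\tau_2}^H)$ collapses to $-A_1 A_2(\tau_2^{2H}+\tau_1^{2H}-|\tau_2-\tau_1|^{2H})$, which is precisely the integrand of the third term once the prefactor $C^2$ is pulled outside the integral.

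I do not expect a genuine obstacle here, since the corollary is a direct algebraic specialization of an already-proved proposition rather than a new estimate. The only real bookkeeping hazard is sign management: carrying the minus sign in the definition of $\hat{B}_t^H$ through the expansion so that the cross term emerges negative, and substituting the covariance with the correct $|\tau_2-\tau_1|^{2H}$ (a fixed quantity) rather than a $t$-dependent term. Thus the \emph{hard part}, such as it is, reduces to verifying that the closed forms of $C, A_1, A_2$ coming from $\Sigma_{12}\Sigma_{22}^{-1}$ reproduce the displayed integrand exactly.
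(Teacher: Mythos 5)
Your proposal is correct and follows essentially the same route as the paper's own proof: specialize Equation~\eqref{eq:dis-non} to $N=2$, reuse the one-sample identity $\mathbb{E}(\hat{B}_t^H)^2 = \frac{1}{4\tau_1^{2H}}(t^{2H}+\tau_1^{2H}-|t-\tau_1|^{2H})^2$ on $[\tau_1,\tau_2)$, and on $[\tau_2,T)$ expand the square of $C(A_1 B_{\tau_1}^H - A_2 B_{\tau_2}^H)$ and take expectations using $\mathbb{E}(B_{\tau_j}^H)^2=\tau_j^{2H}$ and the fBm covariance, so that the factor of $2$ in the cross term cancels the $\frac{1}{2}$ in $\mathrm{Cov}(B_{\tau_1}^H,B_{\tau_2}^H)$. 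No gaps.
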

The proof of Corollary \ref{cor-deter-2} is given in Section \ref{appendix:corollary1}. In order to numerically compute the optimal $\hat{\tau}_1$ and $\hat{\tau}_2$ that minimize $J(\tau_1, \tau_2)$ in Corollary \ref{cor-deter-2}, we use linear constrained optimization techniques \citep{lange1999numerical}. In the same way, we can also find the optimal $\hat{\tau}_1$ and $\hat{\tau}_2$ for the truncated optimal sampling by minimizing the distortion in Equation \eqref{two-sample-trunc-deter} for $N = 2$. In Tables \ref{table-deter-two-sample-nontrunc} and \ref{table-deter-two-sample-trunc}, we present the optimal sampling times $(\hat{\tau}_1, \hat{\tau}_2)$ in the two samples non-truncated case and truncated case for Hurst parameter $H = 0.1, 0.2, 0.3, \ldots, 0.9$ when $T = 20$. When $H = 0.5$, the two samples non-truncated case is the same as the truncated case, so we leave it empty in Table \ref{table-deter-two-sample-nontrunc}. We also illustrate the same results in Figure \ref{fig2}.

\begin{table}[h]
\begin{center}
\begin{tabular}{| c | c | c | c | c | c | }
\hline
Hurst & $(\hat{\tau}_1, \hat{\tau}_2)$ & Distortion & Hurst & $(\hat{\tau}_1, \hat{\tau}_2)$ & Distortion \\
\hline
0.1 & (3.197, 9.718) & 20.514 & 0.2 & (5.317, 12.190) & 27.851 \\
\hline
0.3 & (6.203, 12.976) & 37.408 & 0.4 & (6.569, 13.264) & 50.108  \\
\hline
0.6 & (6.582, 13.279) & 87.251 & 0.7 & (6.325, 13.122) & 110.298\\
\hline
0.8 & (5.839, 12.847) & 129.302 & 0.9 & (4.897, 12.334) & 123.020\\
\hline
\end{tabular}
\caption{\label{table-deter-two-sample-nontrunc}Optimal sampling times $(\hat{\tau}_1, \hat{\tau}_2)$ and the corresponding optimal distortions in the two samples non-truncated case for Hurst parameter $H = 0.1, 0.2, 0.3, \ldots, 0.9$ when $T = 20$.}
\end{center}
\end{table}

\begin{table}[h]
\begin{center}
\begin{tabular}{| c | c | c | c | c | c |}
\hline
Hurst & $(\hat{\tau}_1, \hat{\tau}_2)$ & Distortion & Hurst & $(\hat{\tau}_1, \hat{\tau}_2)$ & Distortion\\
\hline
0.1 & (3.172, 11.214) & 21.088 & 0.2 & (5.302, 12.597) & 28.211\\
\hline
0.3 & (6.198, 13.092) & 37.614 & 0.4 & (6.568, 13.283) & 50.176 \\
\hline
0.5 & (6.667, 13.333) & 66.667 & 0.6 & (6.581, 13.291) & 87.368\\
\hline
0.7 & (6.322, 13.161) & 110.870 & 0.8 & (5.828, 12.912) & 130.694\\
\hline
0.9 & (4.874, 12.410) & 124.947 & & &\\
\hline
\end{tabular}
\caption{\label{table-deter-two-sample-trunc}Optimal sampling times $(\hat{\tau}_1, \hat{\tau}_2)$ and the corresponding optimal distortions in the two samples truncated case for Hurst parameter $H = 0.1, 0.2, 0.3, \ldots, 0.9$ when $T = 20$.}
\end{center}
\end{table}

\begin{figure}[h]
\begin{center}
  \includegraphics[width=0.7\linewidth]{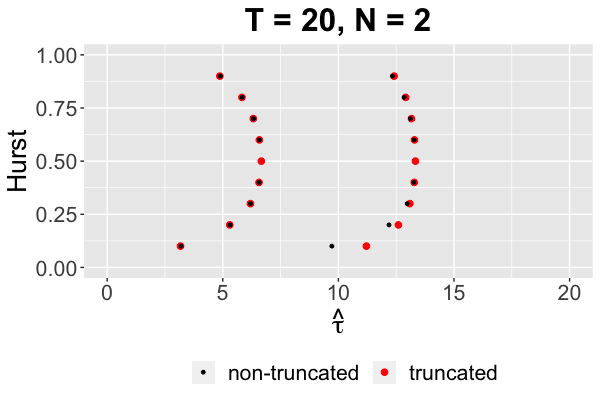}\par
  \caption{Optimal sampling times for $T=20$ in two samples cases. The $y$-axis represents the Hurst parameter and $x$-axis represents the optimal sampling times in two samples cases for the corresponding Hurst parameter. The black points represent the non-truncated case and the red points represent the truncated case.}
  \label{fig2}
\end{center}
\end{figure}

From Figure \ref{fig2}, we can see that the optimal sampling time sequence is the uniform sampling in $[0,T]$ when $H = \frac{1}{2}$. When $H$ keeps further away from $H = \frac{1}{2}$, the optimal sampling times become smaller. We observe that the optimal sampling times for the truncated case and the non-truncated case are almost the same except when $H$ is around 0.1. When $H$ is around 0.1, the optimal sampling time $\hat{\tau}_2$ in the non-truncated case is smaller than that in the truncated case.

We also compare the distortions for different Hurst parameters in Figure \ref{two-loss}. Given the same Hurst parameter, we can see the distortions for the truncated and the non-truncated sampling case are almost the same, while the non-truncated sampling case gives a lightly smaller distortion.
\begin{figure}[h]
\begin{center}
  \includegraphics[width=0.7\linewidth]{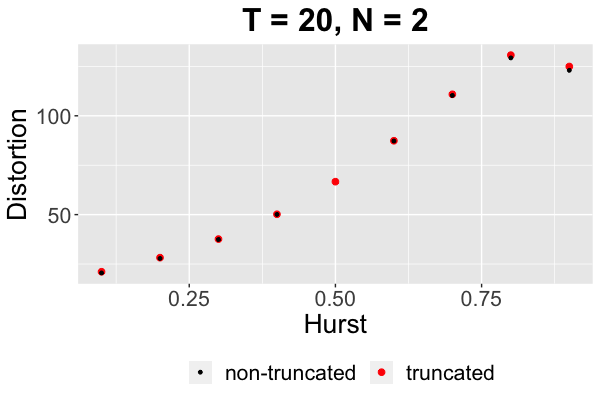}\par
  \caption{Optimal distortions for two samples cases when $T=20$. The $y$-axis represents the optimal distortion and $x$-axis represents the corresponding Hurst parameter in two samples cases. The black points represent the non-truncated case and the red points represent the truncated case.}
  \label{two-loss}
\end{center}
\end{figure}

\subsubsection{Three samples case}

For $\tau_3 \le t < T$, based on Equation \eqref{eq:2}, the estimation $\hat{B}_t^H$ is
\begin{align*}
\hat{B}_t^H &= \mathbb{E}(B_t^H|\mathcal{F}_{\tau_3}) = \mathbb{E}(B_t^H|B_{\tau_1}^H, B_{\tau_2}^H,B_{\tau_3}^H)\\
&= C(\tau_1, \tau_2, \tau_3, H)(A_1(\tau_1, \tau_2, \tau_3, t, H)B_{\tau_1}^H + A_2(\tau_1, \tau_2, \tau_3, t, H)B_{\tau_2}^H  + A_3(\tau_1, \tau_2, \tau_3, t, H)B_{\tau_3}^H).
\end{align*}
The details regarding $C(\tau_1, \tau_2, \tau_3, H)$, $A_1(\tau_1, \tau_2, \tau_3, t, H)$, $A_2(\tau_1, \tau_2, \tau_3, t, H)$ and $A_3(\tau_1, \tau_2, \tau_3, t, H)$ are given in Section \ref{appendix:corollary2}.

\begin{cor}
\label{cor-deter-3}
When the underlying process is a fractional Brownian motion with Hurst parameter $H\in(0,1)$, the aggregate squared error distortion for the three samples case is
\begin{align*}
J(\tau_1, \tau_2, \tau_3) &= \frac{T^{2H+1}}{2H+1} - \frac{1}{4\tau_1^{2H}}\int_{\tau_1}^{\tau_2} (s^{2H}+\tau_1^{2H}-|s-\tau_1|^{2H})^2ds\\
&-C(\tau_1, \tau_2, H)^2\int_{\tau_2}^{\tau_3}[A_1(\tau_1, \tau_2, t, H)^2\tau_1^{2H} + A_2(\tau_1, \tau_2, t, H)^2\tau_2^{2H}\\
& - A_1(\tau_1, \tau_2, t, H)A_2(\tau_1, \tau_2, t, H) (\tau_2^{2H}+\tau_1^{2H}- |\tau_2-\tau_1|^{2H})]dt\\
&- C(\tau_1, \tau_2, \tau_3, H)^2 \int_{\tau_3}^{T} [A_1(\tau_1, \tau_2, \tau_3, t, H)^2\tau_1^{2H}+A_2(\tau_1, \tau_2, \tau_3, t, H)^2\tau_2^{2H} \\
&+A_3(\tau_1, \tau_2, \tau_3, t, H)^{2}\tau_3^{2H}\\
&+A_1(\tau_1, \tau_2, \tau_3, t, H)A_2(\tau_1, \tau_2, \tau_3, t, H)(\tau_1^{2H}+\tau_2^{2H}-(\tau_2-\tau_1)^{2H})\\
&+A_1(\tau_1, \tau_2, \tau_3, t, H)A_3(\tau_1, \tau_2, \tau_3, t, H)(\tau_1^{2H}+\tau_3^{2H}-(\tau_3-\tau_1)^{2H})\\
&+A_2(\tau_1, \tau_2, \tau_3, t, H)A_3(\tau_1, \tau_2, \tau_3, t, H)(\tau_2^{2H}+\tau_3^{2H}-(\tau_3-\tau_2)^{2H})
]dt.
\end{align*}
\end{cor}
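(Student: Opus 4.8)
The plan is to specialize the general distortion identity~\eqref{eq:dis-non} to $N=3$ and evaluate the surviving integrals one interval at a time. Writing $\tau_0=0$ and $\tau_4=T$, the identity becomes
\begin{align*}
J(\tau_1,\tau_2,\tau_3)=\frac{T^{2H+1}}{2H+1}-\sum_{i=1}^{3}\int_{\tau_i}^{\tau_{i+1}}\mathbb{E}(\hat{B}_t^H)^2\,dt,
\end{align*}
so that the contribution of $[0,\tau_1]$ has already been absorbed into $\frac{T^{2H+1}}{2H+1}=\int_0^T\mathbb{E}(B_t^H)^2\,dt$ (recall $\hat{B}_t^H=0$ there). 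Three terms then remain, over $[\tau_1,\tau_2]$, $[\tau_2,\tau_3]$, and $[\tau_3,T]$, and the only feature distinguishing them is that $\hat{B}_t^H$ conditions on one, two, and three past samples respectively. The whole proof reduces to computing $\mathbb{E}(\hat{B}_t^H)^2$ on each interval and integrating.

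The first two integrals require no new work. On $[\tau_1,\tau_2]$ we condition on the single sample $B_{\tau_1}^H$, so $\mathbb{E}(\hat{B}_t^H)^2$ is exactly the integrand already obtained in Proposition~\ref{prop:deter-one-sample}, and integrating it over $[\tau_1,\tau_2]$ rather than $[\tau_1,T]$ produces the term $\frac{1}{4\tau_1^{2H}}\int_{\tau_1}^{\tau_2}(s^{2H}+\tau_1^{2H}-|s-\tau_1|^{2H})^2\,ds$. On $[\tau_2,\tau_3]$ we condition on $(B_{\tau_1}^H,B_{\tau_2}^H)$, which is the identical two-sample situation analysed in Corollary~\ref{cor-deter-2}; its integrand in terms of $C(\tau_1,\tau_2,H)$, $A_1$, $A_2$ carries over verbatim, and only the upper limit $T$ is replaced by $\tau_3$. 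Thus the first two lines of the claimed formula follow by merely re-reading the earlier results with adjusted endpoints.

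The genuinely new piece is the three-sample term over $[\tau_3,T]$. Here I would invoke the Gaussian conditional-mean formula~\eqref{eq:2} to write $\hat{B}_t^H=\Sigma_{12}\Sigma_{22}^{-1}(B_{\tau_1}^H,B_{\tau_2}^H,B_{\tau_3}^H)^{T}=C(A_1B_{\tau_1}^H+A_2B_{\tau_2}^H+A_3B_{\tau_3}^H)$, where $C,A_1,A_2,A_3$ are the entries of $\Sigma_{12}\Sigma_{22}^{-1}$ recorded in the appendix. Squaring and taking expectations, and using $\mathbb{E}(B_{\tau_j}^H)^2=\tau_j^{2H}$ together with $\mathbb{E}(B_{\tau_j}^HB_{\tau_k}^H)=\tfrac12(\tau_j^{2H}+\tau_k^{2H}-|\tau_j-\tau_k|^{2H})$, the three diagonal contributions give $A_j^2\tau_j^{2H}$ while each cross term contributes $2A_jA_k\cdot\tfrac12(\tau_j^{2H}+\tau_k^{2H}-(\tau_k-\tau_j)^{2H})$; the factor $2$ cancels the $\tfrac12$, yielding precisely the covariance brackets displayed in the corollary after pulling out the common $C^2$. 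Integrating over $[\tau_3,T]$ then gives the final line.

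The main obstacle is not this expansion, which is mechanical, but obtaining tractable closed forms for $C,A_1,A_2,A_3$: these demand a symbolic inversion of the $3\times3$ covariance matrix $\Sigma_{22}$ of $(B_{\tau_1}^H,B_{\tau_2}^H,B_{\tau_3}^H)$, whose entries are the $\tfrac12(\tau_j^{2H}+\tau_k^{2H}-|\tau_j-\tau_k|^{2H})$. I would carry this out by Cramér's rule and cofactor expansion, writing $C$ as a multiple of the reciprocal determinant and each $A_j$ as the appropriate signed cofactor combination, and relegate the resulting bulky expressions to the appendix, exactly as the statement anticipates. No convergence or integrability issue arises, since every integrand is a bounded continuous function of $t$ on a compact interval.
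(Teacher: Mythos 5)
Your proposal is correct and follows essentially the same route as the paper: specialize Equation~\eqref{eq:dis-non} to $N=3$, reuse the one- and two-sample integrands from Proposition~\ref{prop:deter-one-sample} and Corollary~\ref{cor-deter-2} on $[\tau_1,\tau_2]$ and $[\tau_2,\tau_3]$ with adjusted endpoints, and for $[\tau_3,T]$ expand $\mathbb{E}(\hat{B}_t^H)^2$ via the conditional Gaussian formula~\eqref{eq:2} and the fBm covariance, with $C,A_1,A_2,A_3$ obtained by symbolically inverting the $3\times3$ matrix $\Sigma_{22}$. The cancellation of the cross-term factor $2$ against the $\tfrac12$ in the covariance is handled exactly as in the paper's appendix.
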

The proof of Corollary \ref{cor-deter-3} is given in Section \ref{appendix:corollary2}. 
Using the same techniques as those in the two samples case, we obtain the optimal sampling times  $\hat{\tau}_1$, $\hat{\tau}_2$ and $\hat{\tau}_3$ for the truncated and the non-truncated case. The results are summarized in Table \ref{table-deter-three-sample-nontrunc}, Table \ref{table-deter-three-sample-trunc} and Figure \ref{three-sample}.

\begin{table}[h]
\begin{center}
\begin{tabular}{| c | c | c | c | c | c | }
\hline
Hurst & $(\hat{\tau}_1, \hat{\tau}_2, \hat{\tau}_3)$ & Distortion & Hurst & $(\hat{\tau}_1, \hat{\tau}_2, \hat{\tau}_3)$ & Distortion \\
\hline
0.1 & (2.177,  6.559, 12.162) & 18.802 & 0.2 & (3.867,  8.849, 14.165) & 24.429 \\
\hline
0.3 & (4.607,  9.634, 14.749) & 31.249 & 0.4 & (4.916,  9.927, 14.952) & 39.730  \\
\hline
0.6 & (4.929,  9.943, 14.964) & 61.637 & 0.7 & (4.712,  9.775, 14.863) & 72.954\\
\hline
0.8 & (4.308,  9.479, 14.688) & 79.296 & 0.9 & (3.546,  8.938, 14.380) & 68.520\\
\hline
\end{tabular}
\caption{\label{table-deter-three-sample-nontrunc}Optimal sampling times $(\hat{\tau}_1, \hat{\tau}_2, \hat{\tau}_3)$ and the corresponding optimal distortions in the three samples non-truncated case for Hurst parameter $H = 0.1, 0.2, 0.3, \ldots, 0.9$ when $T = 20$.}
\end{center}
\end{table}

\begin{table}[h]
\begin{center}
\begin{tabular}{| c | c | c | c | c | c |}
\hline
Hurst & $(\hat{\tau}_1, \hat{\tau}_2, \hat{\tau}_3)$ & Distortion & Hurst & $(\hat{\tau}_1, \hat{\tau}_2, \hat{\tau}_3)$ & Distortion \\
\hline
0.1 & (2.195,  7.760, 13.839) & 19.751 & 0.2 & (3.871,  9.199, 14.605) & 25.021\\
\hline
0.3 & (4.608,  9.735, 14.872) & 31.577 & 0.4 & (4.917,  9.945, 14.974) & 39.835 \\
\hline
0.5 & (5.000, 10.000, 15.000) & 50 & 0.6 & (4.929,  9.954, 14.979) & 61.803\\
\hline
0.7 & (4.714,  9.814, 14.914) & 73.743 & 0.8 & (4.309,  9.547, 14.787) & 81.192\\
\hline
0.9 & (3.542,  9.019, 14.534) & 71.245 & & &\\
\hline
\end{tabular}
\caption{\label{table-deter-three-sample-trunc}Optimal sampling times $(\hat{\tau}_1, \hat{\tau}_2, \hat{\tau}_3)$ and the corresponding optimal distortions in the three samples truncated case for Hurst parameter $H = 0.1, 0.2, 0.3, \ldots, 0.9$ when $T = 20$.}
\end{center}
\end{table}

\begin{figure}[h]
\begin{center}
  \includegraphics[width=0.7\linewidth]{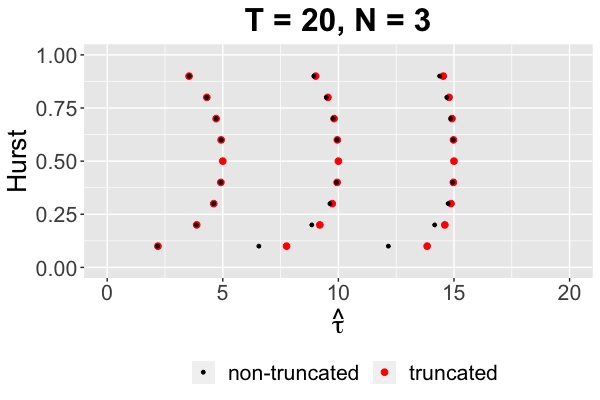}\par
  \caption{Optimal sampling times for $T=20$ in three samples cases. The $y$-axis represents the Hurst parameter and $x$-axis represents the optimal sampling times in three samples cases for the corresponding Hurst parameter. The black points represent the non-truncated case and the red points represent the truncated case.}
  \label{three-sample}
\end{center}
\end{figure}

From Figure \ref{three-sample}, we can see the optimal sampling time sequence is the uniform sampling in $[0,T]$ when $H = \frac{1}{2}$. When $H$ keeps further away from $H = \frac{1}{2}$, the optimal sampling times become smaller. We observe that the optimal sampling times for the truncated and non-truncated sampling methods are almost the same except when $H$ is around 0.1. When $H$ is around 0.1, the optimal $\hat{\tau}_2$ and $\hat{\tau}_3$ in non-truncated sampling case are both smaller than those in the truncated sampling case. We also compare the distortion for both cases in Figure \ref{three-sample-loss}. We can see that the non-truncated sampling has a smaller distortion compared with that for the truncated sampling, especially when $H$ is close to 1 or close to 0. But the distortion difference between these two is quite small.

\begin{figure}[h]
\begin{center}
  \includegraphics[width=0.7\linewidth]{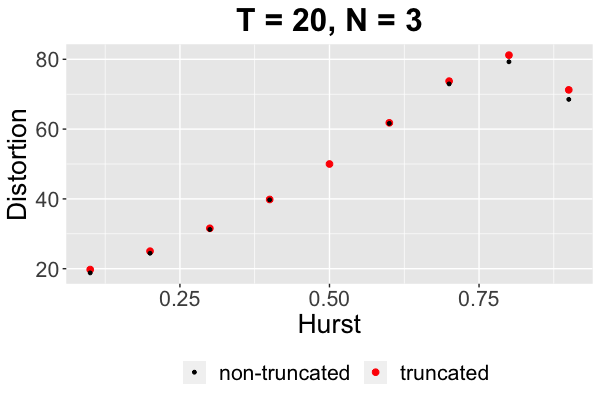}\par
  \caption{Optimal distortions for three samples cases when $T=20$. The $y$-axis represents the optimal distortion and $x$-axis represents the corresponding Hurst parameter in three samples cases. The black points represent the non-truncated case and the red points represent the truncated case.}
   \label{three-sample-loss}
\end{center}
\end{figure}

Through the results in one sample, two samples, and three samples cases, we can see that the optimal sampling times tend to be smaller when the Hurst parameter $H$ is away from 0.5. For the same number of samples allowed, the distortion has a monotone increasing trend when $H$ increases from 0.1 to 0.8 when the number of samples is bigger than 1. But it decreases when $H$ increases from 0.8 to 0.9. For the non-truncated and the truncated sampling cases, we can see that they both have similar optimal sampling times and similar optimal distortions. The non-truncated sampling has a slightly better distortion compared with the truncated sampling. As a result, the truncated sampling can be used to save some computation resources and can achieve similar performance as the non-truncated sampling.

\section{Level-Triggered Sampling}
\label{section:level-triggered}

\subsection{One sample case}

In the level-triggered sampling framework, the sampling sequence is selected sequentially, and hence it is dependent on the trajectory of $B_t^H$. The actual sampling times are the times when the change of the trajectory $B_t^H$ crosses predetermined thresholds. In order to illustrate the method, we first discuss the single sample case, where we can only obtain one observation at the time when the trajectory of $B_t^H$ crosses a predetermined threshold $\eta$.

For a given $\eta \ge 0$,  we define the time to obtain a sample as: $\tau_{\eta} = \inf_{t \ge 0}\{t: |B^H_t|\ge \eta\}$. The actual sampling time is $\tau_1 = \tau_{\eta} \land T$. Therefore, we have the following result regarding the aggregate squared error distortion.

\begin{prop}
When the underlying process is a fractional Brownian motion with Hurst parameter $H\in(0,1)$, the aggregate squared error distortion for level-triggered sampling one sample case with threshold $\eta$ is given by
\begin{align*}
J(\eta) & = \frac{T^{2H+1}}{2H+1 }- \mathbb{E}[\frac{1}{4}\eta^2((T-\tau_{1}) + \frac{T^{4H+1}/\tau_1^{4H}-\tau_1}{4H+1} + \frac{2(T^{2H+1}-\tau_1^{2H+1})/\tau_1^{2H}}{2H+1} \\
& +\frac{(T - \tau_1)^{4H+1}/\tau_1^{4H}}{4H+1} - 2\frac{(T - \tau_1)^{2H+1}/\tau_1^{2H}}{2H+1} - 2\tau_1\int_1^{\frac{T}{\tau_1}}\mu^{2H}(\mu-1)^{2H}d\mu)] \\
&=   \frac{T^{2H+1}}{2H+1 }-\mathbb{E}[h(\eta, B_t^H))] 
\end{align*}
where $\tau_{\eta} = \inf_{t \ge 0}\{t: |B^H_t|\ge \eta\}$ and $\tau_1 = \tau_{\eta} \land T$.
\label{prop:one}
\end{prop}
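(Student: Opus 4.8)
The plan is to mirror the computation behind Proposition~\ref{prop:deter-one-sample}, the only genuine differences being that the sampling time $\tau_1 = \tau_\eta \wedge T$ is now a random stopping time and that I must exploit the defining property of the level crossing. First I would split the integral in the distortion \eqref{eq:1} at $\tau_1$. On $[0,\tau_1)$ the estimator is $\hat B_s^H = \mathbb{E}[B_s^H \mid \mathcal F_0] = 0$, so that piece contributes $\mathbb{E}\int_0^{\tau_1}(B_s^H)^2\,ds$. On $[\tau_1, T]$ the estimator is $\hat B_s^H = \mathbb{E}[B_s^H \mid \mathcal F_{\tau_1}]$, and the orthogonality of the least-squares projection (conditioning on $\mathcal F_{\tau_1}$ and using that $\hat B_s^H$ is $\mathcal F_{\tau_1}$-measurable) gives
\begin{align*}
\mathbb{E}\big[(B_s^H-\hat B_s^H)^2 \mid \mathcal F_{\tau_1}\big] = \mathbb{E}\big[(B_s^H)^2 \mid \mathcal F_{\tau_1}\big] - (\hat B_s^H)^2 .
\end{align*}

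Taking expectations, applying Fubini, and recombining the two pieces collapses the cross term exactly as in Proposition~\ref{prop:deter-one-sample}, so that
\begin{align*}
J(\eta) = \int_0^T \mathbb{E}(B_s^H)^2\,ds - \mathbb{E}\int_{\tau_1}^T (\hat B_s^H)^2\,ds = \frac{T^{2H+1}}{2H+1} - \mathbb{E}\int_{\tau_1}^T (\hat B_s^H)^2\,ds,
\end{align*}
where I used $\mathbb{E}(B_s^H)^2 = s^{2H}$ and $\int_0^T s^{2H}\,ds = T^{2H+1}/(2H+1)$.

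Next I would insert the single-sample estimator, which for $s \geq \tau_1$ reads $\hat B_s^H = \frac{B_{\tau_1}^H}{2\tau_1^{2H}}\big(s^{2H}+\tau_1^{2H}-(s-\tau_1)^{2H}\big)$. The structural feature peculiar to level-triggered sampling enters here: on the event $\{\tau_\eta < T\}$, path-continuity of the fBm forces $|B_{\tau_1}^H| = \eta$, hence $(B_{\tau_1}^H)^2 = \eta^2$ wherever the integral is nonvanishing (on $\{\tau_\eta \geq T\}$ we have $\tau_1 = T$ and the integral is empty). Thus the random square $(B_{\tau_1}^H)^2$ is replaced by the constant $\eta^2$, pulled in front of the prefactor $\frac{1}{4\tau_1^{4H}}$, and what remains inside the expectation is a deterministic function of $\tau_1$. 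I would then expand $\big(s^{2H}+\tau_1^{2H}-(s-\tau_1)^{2H}\big)^2$ into its six monomials and integrate each over $[\tau_1,T]$: five are elementary powers yielding the terms with $\frac{1}{4H+1}$ and $\frac{1}{2H+1}$, while for the cross term $\int_{\tau_1}^T s^{2H}(s-\tau_1)^{2H}\,ds$ the substitution $s = \tau_1\mu$ produces the factor $\tau_1^{4H+1}$ together with the residual integral $\int_1^{T/\tau_1}\mu^{2H}(\mu-1)^{2H}\,d\mu$. Collecting the six contributions reproduces exactly $h(\eta, B_t^H)$, and taking the outer expectation over $\tau_1$ finishes the proof.

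The main obstacle is conceptual rather than computational: justifying the use of the Gaussian conditional-expectation formula at the \emph{random} hitting time $\tau_1$. For a Markov process the strong Markov property would legitimize treating $\tau_1$ as effectively deterministic given its value, but the fBm is not Markov for $H \neq \tfrac12$, so one must argue carefully that $\mathbb{E}[B_s^H \mid \mathcal F_{\tau_1}]$ is still the projection onto the single observed coordinate $B_{\tau_1}^H$ and that the orthogonality identity survives the randomness of $\tau_1$. The compensating simplification is that the joint law of $(\tau_1, B_{\tau_1}^H)$ is never needed in closed form — the crossing pins down $(B_{\tau_1}^H)^2 = \eta^2$, and everything else is a deterministic function of $\tau_1$ whose law is intractable, which is precisely why the result is stated as the expectation $\mathbb{E}[h(\eta, B_t^H)]$ rather than an explicit formula.
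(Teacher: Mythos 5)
Your proposal is correct and follows essentially the same route as the paper: reduce $J(\eta)$ to $\frac{T^{2H+1}}{2H+1}-\mathbb{E}\int_{\tau_1}^T(\hat B_s^H)^2\,ds$ via the orthogonality of the conditional expectation, insert the single-sample Gaussian regression estimator, use path-continuity at the level crossing to replace $(B_{\tau_1}^H)^2$ by $\eta^2$ on $\{\tau_\eta<T\}$, and expand and integrate term by term (with the substitution $s=\tau_1\mu$ for the cross term). The measurability subtlety you flag about conditioning at the random time $\tau_1$ is real but is glossed over in the paper as well, which simply adopts the single-coordinate regression formula as the estimator's definition in the level-triggered setting.
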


\begin{proof}
The expression for the distortion is easily obtained by direct calculations using the properties of the fBm:
\begin{align}
J(\eta) &  = \frac{T^{2H+1}}{2H+1 }- \mathbb{E}\int_{\tau_1}^T(\hat{B}_s^H)^2ds \nonumber\\
&= \frac{T^{2H+1}}{2H+1 }-  \mathbb{E}\int_{\tau_1}^T(\frac{B_{\tau_1}^H}{2\tau_1^{2H}}(s^{2H}+\tau_1^{2H}-|s-\tau_1|^{2H}))^2ds \nonumber\\
 &= \frac{T^{2H+1}}{2H+1 }-\frac{1}{4} \mathbb{E}[(B_{\tau_1}^H)^2((T-\tau_{1}) + \frac{T^{4H+1}/\tau_1^{4H}-\tau_1}{4H+1} + \frac{2(T^{2H+1}-\tau_1^{2H+1})/\tau_1^{2H}}{2H+1} \nonumber\\
& +\frac{(T - \tau_1)^{4H+1}/\tau_1^{4H}}{4H+1} - 2\frac{(T - \tau_1)^{2H+1}/\tau_1^{2H}}{2H+1} - 2\tau_1\int_1^{\frac{T}{\tau_1}}\mu^{2H}(\mu-1)^{2H}d\mu)] \nonumber\\
 &= \frac{T^{2H+1}}{2H+1 }- \mathbb{E}[\frac{1}{4}\eta^2((T-\tau_{1}) + \frac{T^{4H+1}/\tau_1^{4H}-\tau_1}{4H+1} + \frac{2(T^{2H+1}-\tau_1^{2H+1})/\tau_1^{2H}}{2H+1} \nonumber\\
& +\frac{(T - \tau_1)^{4H+1}/\tau_1^{4H}}{4H+1} - 2\frac{(T - \tau_1)^{2H+1}/\tau_1^{2H}}{2H+1} - 2\tau_1\int_1^{\frac{T}{\tau_1}}\mu^{2H}(\mu-1)^{2H}d\mu)] 
 \label{eq:3} \nonumber\\
 &=   \frac{T^{2H+1}}{2H+1 }- \mathbb{E}[h(\eta, B_t^H)].
\end{align}
\end{proof}

We can write $F(\eta) =  \mathbb{E}[h(\eta, B_t^H)]$. In order to find the optimal threshold $\hat{\eta}$ which can minimize the $J(\eta)$ in Proposition \ref{prop:one}, we need to find the $\hat{\eta}$ that maximizes $F(\eta)$. But the form of $F(\eta)$ is not completely known with respect to $\eta$ and only observation with variation $h(\eta, B_t^H)$ calculated using a fBm sample path $B_t^H$ can be made. In order to find the point of optimum of $J(\eta)$, we need to use a stochastic approximation algorithm. 

\subsubsection{Stochastic optimization}

\citet{kiefer1952} proposed a stochastic approximation algorithm to find the point that maximizes a function observed with variation (noise). Using Kiefer-Wolfowitz (KW) Algorithm, we can initialize parameter $\eta$ as $\eta_0$. For each iteration, we update parameter as
\begin{align*}
\eta^{(n+1)} = \eta^{(n)} + a^{(n)}Y^{(n)}
\end{align*}
with
\begin{align*}
Y^{(n)} &= \frac{h(\eta^{(n)} + c^{(n)},  \delta_n^+) - h(\eta^{(n)}- c^{(n)}, \delta_n^-)}{2c^{(n)}},
\end{align*}
where $\delta_n^+$ and $\delta_n^-$ are two independent fBm sample paths and $h(\eta,  \delta)$ is the noise-corrupted observation using the fBm sample path $\delta$, and $\{a^{(n)}\}$ and $\{c^{(n)}\}$ are two real-valued, deterministic tuning sequences. Since using only two fBm sample paths at each iteration to calculate $Y^{(n)}$ is too noisy and does not have a good finite time performance, we use a group of multiple fBm sample paths at each iteration and take the average of them to calculate $Y^{(n)}$. This is also suggested by \citet{robbins1951stochastic}.

Several papers in the literature \citep{D1957, fabian1967stochastic, polyak1990optimal} have focused on establishing bounds on the mean-squared error (MSE) $\mathbb{E}(\eta^{(n)}-\eta^*)^2$, where $\eta^*$ is the maximum point of $F(\eta)$, and deriving the optimal rates at which the MSE converges to zero. \citet{broadie2011general} proposed an adaptive version of the KW algorithm, which can improve its finite-time performance. They proposed the optimal choice of the tuning sequences as $a^{(n)} = \alpha/(n + \beta)$ and $c^{(n)} = \gamma/n^{1/4}$ for some $\alpha, \gamma \in \mathbb{R}_+ \text{ and } \beta \in \mathbb{Z}_+$. Their method dynamically scales and shifts the $a^{(n)}$ and $c^{(n)}$ sequences by adaptively adjusting $\alpha$, $\beta$ and $\gamma$ to better match them with the characteristics and noise level of the observation $h(\eta, \delta)$, and thus improve the finite time performance. We run Broadie method \citep{broadie2011general} until the sequence $\{\eta^{(n)}\}$ converges to the optimum, which gives us the optimal threshold $\hat{\eta}$ that minimizes $J(\eta)$. 

In Table \ref{table-level-one-sample}, we present the optimal thresholds $\hat{\eta}$ in the single sample case for Hurst parameter $H = 0.1, 0.2, \ldots, 0.7$ when $T=20$. We recover the optimal $\hat{\eta}$ result for Brownian motion in \citet{rabi2006multiple} and generalize it to the fractional Brownian motion when $H \neq \frac{1}{2}$. As we can see in Table \ref{table-level-one-sample}, when $H$ increases from 0.1 to 0.7, the optimal threshold $\hat{\eta}$ keeps increasing. This shows that as the Hurst parameter $H$ increases, the optimal threshold becomes larger. As $H$ increases, fBm has less fluctuation and has a smoother behavior, which leads the fBm to go toward one direction. Thus, it is reasonable to have a larger optimal threshold $\hat{\eta}$ when $H$ increases. 
 
\begin{table}[h]
\begin{center}
\begin{tabular}{| c | c | c | c | c | c | c | c | c |}
\hline
Hurst & $\hat{\eta}$ & Distortion & Hurst & $\hat{\eta}$ & Distortion & Hurst & $\hat{\eta}$ & Distortion \\
\hline
0.1 & 2.364 & 2.395 & 0.2 & 2.666 & 9.404 & 0.3 & 3.024 & 22.116 \\
\hline
0.4 & 3.500 & 43.521 & 0.5 & 4.176 & 78.963 & 0.6 & 4.850 & 134.631\\
\hline
0.7 & 5.741 & 216.328 & & & & & &\\
\hline
\end{tabular}
\caption{\label{table-level-one-sample}Optimal thresholds $\hat{\eta}$ and corresponding minimized distortions in the single sample case for  Hurst parameter $H = 0.1, 0.2, 0.3, \ldots, 0.7$  when $T = 20$.}
\end{center}
\end{table}

\subsubsection{Optimal thresholds given H for fBm}
\label{section:relationship}

Here, we observe the relation between the optimal thresholds and the time horizon $T$ given the Hurst parameter $H$'s. For $T = 20$, we use Broadie method \citep{broadie2011general} to estimate the optimal thresholds $\hat{\eta}$ for the one sample case when the Hurst parameter $H$ ranges from 0.05 to 0.75 with a step of 0.05. 

\begin{figure}[h]
\centering
  \includegraphics[width=0.5\linewidth]{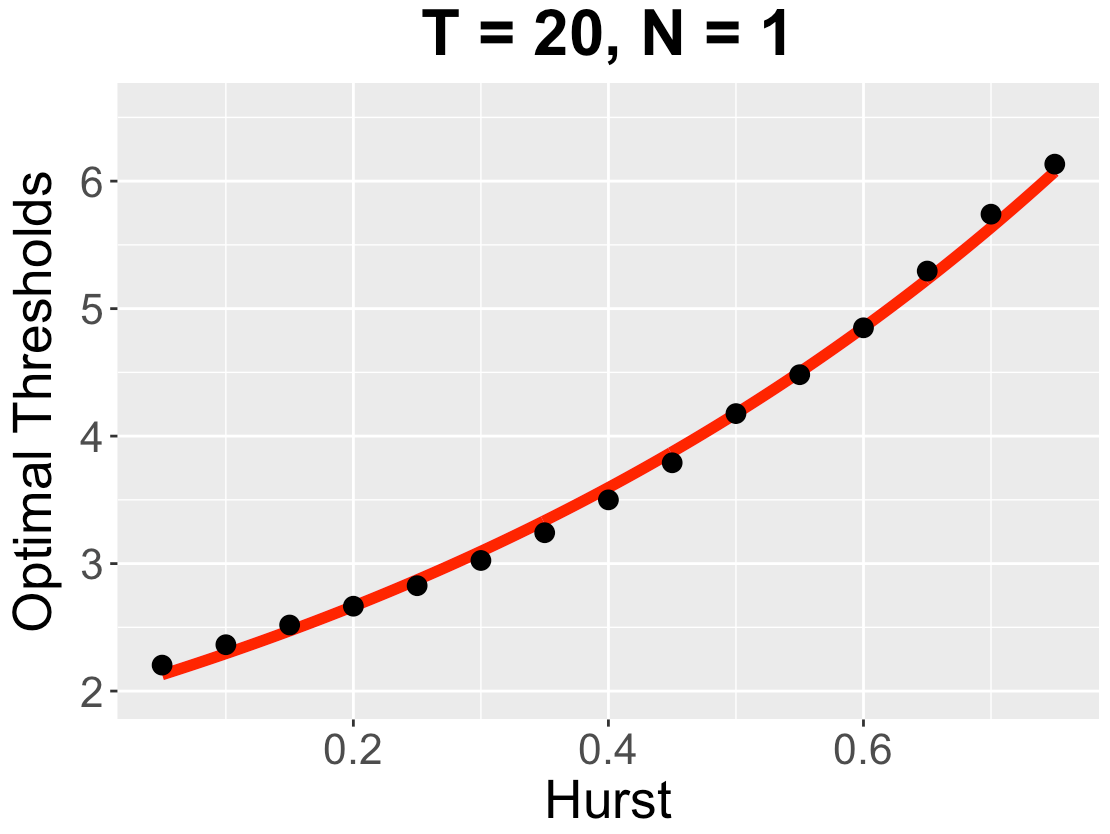}
  \caption{Optimal thresholds $\hat{\eta}$'s  for $H = 0.05, 0.1, 0.15,\ldots, 0.75$ when $T = 20$. The black points are the optimal threshold for a given Hurst parameter. The $x$-axis represents the Hurst parameter and the $y$-axis represents the optimal threshold for the given Hurst parameter. The red line is the relation between the optimal thresholds and the time horizon for different Hurst parameter $H$'s.}
  \label{fig4}
\end{figure}

We plot the optimal thresholds $\hat{\eta}$ for different Hurst parameter $H$'s in Figure \ref{fig4}. The black points in Figure \ref{fig4} are the optimal thresholds for the corresponding Hurst parameters. The red line is the relation we observed between the optimal threshold and the time horizon, which is presented in Equation \eqref{eq:5}:
\begin{align}
\hat{\eta} = C \cdot T^{\frac{1}{4} (2H + 1)},  
\label{eq:5}
\end{align}
where the constant value $C$ is the same for different Hurst parameter $H$'s. The Equation \eqref{eq:5} is motivated by the result for Brownian motion in \citet{rabi2006multiple} when $H = \frac{1}{2}$. We observe there is a certain relation between the optimal threshold $\hat{\eta}$'s and the time horizon $T$ for different Hurst parameter $H$'s. We use this pattern to model the optimal thresholds for the multiple samples case.

\subsection{Multiple samples case}
Similar to the deterministic sampling, we generalize the discussion above in the case of multiple samples. Therefore, given the thresholds $\eta_1 \ge 0, \eta_2 \ge 0, \ldots, \eta_N \ge 0$,  we define the sampling times $\tau_{\eta_1} = \inf_{t \ge 0}\{t: |B^H_t|\ge \eta_1\}$ and  $\tau_{\eta_{i}} = \inf_{t \ge \tau_{i-1}}\{t: |B^H_t - B^H_{\tau_{i-1}}|\ge \eta_{i}\}$ for $2 \le i \le N$. The actual sampling time is given by $\tau_i = \tau_{\eta_i} \land T$ for $1 \le i \le N$.  The performance of the estimation is again quantified using the squared error distortion:
\begin{align*}
J(\eta_1, \eta_2, \ldots, \eta_N) &= \mathbb{E}[\sum_{i = 0}^N\int_{\tau_i}^{\tau_{i+1}}(B_t^H - \hat{B}_t^H)^2dt]\\
&= \frac{T^{2H+1}}{2H+1} - 2 \mathbb{E} \sum_{i = 1}^N \int_{\tau_i}^{\tau_{i+1}} B_t^H\hat{B}_t^Hdt + \mathbb{E} \sum_{i = 1}^N   \int_{\tau_i}^{\tau_{i+1}} (\hat{B}_t^H)^2dt
\end{align*}
where the estimation $\hat{B}_t^H$ is the same as in Equation \eqref{eq:2}. Considering that the numerical computation of $\mathbb{E} \int_{\tau_i}^{\tau_{i+1}} B_t^H\hat{B}_t^Hdt$ using simulation requires several sample paths and keeping in mind that this computation has to be performed at every iteration during the optimization algorithm, it is not feasible to directly optimize the distortion in real time. Therefore, we use the estimation $\hat{B}_t^H$ to represent the true $B_t^H$ in $\mathbb{E} \int_{\tau_i}^{\tau_{i+1}} B_t^H\hat{B}_t^Hdt$ and propose a computationally feasible  modified distortion function, according to which
\begin{align}
J^*(\eta_1, \eta_2, \ldots, \eta_N) = \frac{T^{2H+1}}{2H+1}- \mathbb{E}  \sum_{i = 1}^N  \int_{\tau_i}^{\tau_{i+1}} (\hat{B}_t^H)^2dt  = \frac{T^{2H+1}}{2H+1}- \mathbb{E}h^*((\eta_1, \eta_2, \ldots, \eta_N), B_t^H),
\label{eq:modified-distortion}
\end{align}
where $h^*((\eta_1, \eta_2, \ldots, \eta_N), B_t^H)$ is the observation with variation at the value $(\eta_1, \eta_2, \ldots, \eta_N)$ calculated using a fBm sample path $B_t^H$. We can see the modified distortion in Equation \eqref{eq:modified-distortion} is still not an explicit form and it still needs to be further analyzed to obtain the optimal thresholds. Therefore, we still investigate a truncated version of the problem like in the deterministic sampling. Instead of using all previous sample information like in Equation \eqref{eq:2}, we estimate $\hat{B}_t^H$ using only the last observed value. Therefore,  the estimation of $\hat{B}_t$ at time $\tau_i \le t < \tau_{i+1}$, becomes:
\begin{align*}
\hat{B}_t^H &=  \mathbb{E}(B_t^H|B^H_{\tau_i}) = \frac{B_{\tau_i}^H}{2\tau_i^{2H}}(t^{2H}+\tau_i^{2H}-|t-\tau_i|^{2H}).\\
\end{align*}
This is the truncated case for which we obtain the truncated formula for the modified distortion.

\begin{prop}
\label{prop-two-sample-trunc}
When the underlying process is a fractional Brownian motion with Hurst parameter $H\in(0,1)$, the modified aggregate squared error distortion for $N$ samples case with thresholds $\eta_1, \eta_2, \ldots, \eta_N$ with truncated estimation is obtained by
\begin{align}
J^*(\eta_1, \eta_2, \ldots, \eta_N) &=   \frac{T^{2H+1}}{2H+1}-\mathbb{E}h^*((\eta_1, \eta_2, \ldots, \eta_N), B_t^H) = \frac{T^{2H+1}}{2H+1}- \mathbb{E} \sum_{i = 1}^N   \int_{\tau_i}^{\tau_{i+1}} (\hat{B}_t^H)^2dt\\
&= \frac{T^{2H+1}}{2H+1} -  \mathbb{E}  \sum_{i = 1}^{N} \int_{\tau_i}^{\tau_{i+1}}(\frac{B_{\tau_i}^H}{2\tau_i^{2H}}(t^{2H}+\tau_i^{2H}-|\tau_i-t|^{2H}))^2dt. \label{two-sample-trunc}
\end{align}
\end{prop}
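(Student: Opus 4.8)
The plan is to derive this formula by a direct substitution into the modified distortion already established in Equation \eqref{eq:modified-distortion}. First I would recall that $J^*$ was obtained by replacing the true process $B_t^H$ with its estimator $\hat{B}_t^H$ inside the cross term $\mathbb{E}\int_{\tau_i}^{\tau_{i+1}} B_t^H\hat{B}_t^H\,dt$, which collapses the three-term expansion of $(B_t^H-\hat{B}_t^H)^2$ into the single expression $J^* = \frac{T^{2H+1}}{2H+1} - \mathbb{E}\sum_{i=1}^N \int_{\tau_i}^{\tau_{i+1}} (\hat{B}_t^H)^2\,dt$. The key observation is that this identity is agnostic to the particular estimator used: it holds verbatim whether $\hat{B}_t^H$ is the full-history predictor of Equation \eqref{eq:2} or the truncated one-step predictor. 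This already delivers the first two equalities in the statement, so there is nothing new to prove there.

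Next I would specialize to the truncated estimator. On each interval $\tau_i \le t < \tau_{i+1}$ the truncated predictor is the single-sample conditional expectation $\hat{B}_t^H = \mathbb{E}(B_t^H \mid B_{\tau_i}^H)$, which was computed in the one-sample deterministic analysis via the Gaussian projection formula and equals $\frac{B_{\tau_i}^H}{2\tau_i^{2H}}(t^{2H}+\tau_i^{2H}-|t-\tau_i|^{2H})$. Substituting this expression into the sum of integrals above yields exactly the right-hand side of Equation \eqref{two-sample-trunc}, so the third equality follows by inspection once the substitution is performed under the expectation.

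The one point that requires care -- and which I expect to be the main (though modest) obstacle -- is that in the level-triggered framework the sampling times $\tau_i = \tau_{\eta_i}\wedge T$ are random stopping times rather than deterministic constants. This is precisely why the statement keeps the expectation $\mathbb{E}$ wrapped around the entire sum of integrals: unlike the deterministic truncated case, where $\tau_i$ is a constant and one may use $\mathbb{E}(B_{\tau_i}^H)^2 = \tau_i^{2H}$ to reduce each term to the explicit integral $\frac{1}{4\tau_i^{2H}}\int_{\tau_i}^{\tau_{i+1}}(s^{2H}+\tau_i^{2H}-|s-\tau_i|^{2H})^2\,ds$, here $B_{\tau_i}^H$ and $\tau_i$ are coupled through the threshold-crossing mechanism and cannot be decoupled in closed form. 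I would therefore carry out the substitution pathwise under the outer expectation, noting that on each sub-interval the integrand $(\hat{B}_t^H)^2$ is a measurable function of the revealed value $B_{\tau_i}^H$ and of the deterministic kernel in $(t,\tau_i)$. Linearity of expectation passes the finite sum through $\mathbb{E}$, and boundedness of the nonnegative integrand on $[0,T]$ justifies interchanging the time integral with $\mathbb{E}$ by Tonelli's theorem. With these routine measurability remarks in place, the chain of equalities in the statement is complete.
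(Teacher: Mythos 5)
Your proposal is correct and follows essentially the same route as the paper, which states this proposition without a separate proof precisely because it is an immediate substitution of the truncated estimator $\hat{B}_t^H = \frac{B_{\tau_i}^H}{2\tau_i^{2H}}(t^{2H}+\tau_i^{2H}-|t-\tau_i|^{2H})$ into the definition of the modified distortion in Equation \eqref{eq:modified-distortion}. Your additional remarks on keeping the expectation outside the sum because the stopping times $\tau_i$ and the values $B_{\tau_i}^H$ are coupled, and on justifying the interchange via Tonelli, are sound and merely make explicit what the paper leaves implicit.
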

We still investigate this version of the problem and compare the results with the non-truncated case, since there is a trade-off between computational complexity and accuracy.

From Section \ref{section:relationship} and Equation \eqref{eq:5}, we can see given a Hurst parameter, the optimal thresholds depend on the time horizon $T$. This has also been shown for Brownian motion in \citet{rabi2006multiple} when $H = \frac{1}{2}$.  Then for the multiple samples case, the threshold $\eta_i$, for $1 < i \le N$, depends on the remaining time horizon $T - \tau_{i-1}$. Since $\tau_{i-1}$ is stochastic and depends on the actual fBm trajectory,  it is not feasible to directly optimize the distortion based on $\eta_i$'s. Considering the relation between optimal threshold and the time horizon we observed in Equation \eqref{eq:5}, we propose to find the optimal threshold $\eta_i$'s for $i = 1, 2, \ldots, N$ having the following structure:
\begin{align*}
\eta_1 &= q_1 T^{\frac{1}{4}(2H+1)},\\
\eta_2 &= q_2 (T-\tau_1)^{\frac{1}{4}(2H+1)},\\
\vdots \\
\eta_N &= q_N (T-\tau_{N-1})^{\frac{1}{4}(2H+1)}.
\end{align*}
Therefore, in order to find the optimal $\hat{\eta}_i$'s under our framework, we need to find the optimal $\hat{q}_i$'s, which lead to writing the modified distortion $J^*(\eta_1, \eta_2, \ldots, \eta_N)$ as a function of the $q_i$'s, i.e. $J^*(q_1, q_2, \ldots, q_N)$. We also write $h^*((q_1, q_2, \ldots, q_N), B_t^H)$ as the equivalent $h^*((\eta_1, \eta_2, \ldots, \eta_N), B_t^H)$. In order to optimize $J^*(q_1, q_2, \ldots, q_N)$, we need to use a multidimensional version of the  stochastic approximation method. 

\subsubsection{Multidimensional stochastic optimization}
A multidimensional version of the KW-algorithm was introduced by \citet{blum1954}. The multidimensional KW-algorithm uses a one-sided finite-difference approximation of the gradient in every direction and can be described by the following way:
\begin{align}
q^{(n+1)}_{k} = q^{(n)}_k + a^{(n)}_k Y^{(n)}_k,
\end{align}
where $k$ represents the $k$th direction in $(q_1, q_2, \ldots, q_N)$, $n$ represents the iteration number and $Y^{(n)}_{k} = [h^*(q^{(n)} + c^{(n)}_ke_k, \delta_{k}^{(n)+})-h^*(q^{(n)}, \delta_{k}^{(n)-})]/c^{(n)}_k$, where $e_k$ is the standard basis in $N$ dimensional real number space for the $k$th dimension. The $\delta_{k}^{(n)+}$'s and $\delta_{k}^{(n)-}$'s are all independent fBm sample paths. Since using only two independent fBm sample paths for each direction $k$ at each iteration is too noisy and has limited finite time convergence results, we use multiple independent fBm sample paths and take an average of them at each iteration. This can give us good finite convergence results. The tuning sequences $a^{(n)}_k$ and $c^{(n)}_k$ are all deterministic. \citet{broadie2014} proposed an adaptive version of the multidimensional KW algorithm that can improve the finite-time performance.  They proposed the optimal choice of the tuning sequences as $a^{(n)}_k = \alpha_k/(n + \beta_k)$'s and $c^{(n)}_k = \gamma_k/n^{1/4}$'s for some $\alpha_k, \gamma_k \in \mathbb{R}_+ \text{ and } \beta_k \in \mathbb{Z}_+$. Their adaptive version of the method dynamically scales and shifts the tuning sequences $a_k^{(n)}$'s and $c_k^{(n)}$'s by adaptively adjusting $\alpha_k$'s, $\beta_k$'s and $\gamma_k$'s based on the characteristics and noise level of the function that needs to be optimized. Their modification can achieve better finite-time performance. For more details, the reader can refer to \citet{broadie2014}. We use this adaptive version of the multidimensional KW algorithm to find the optimal $\hat{q}_1, \hat{q}_2, \ldots, \hat{q}_N$ and obtain the corresponding optimal $\hat{\eta}_1, \hat{\eta}_2, \ldots, \hat{\eta}_N$. In the following sections, we  use $N = 2$ and $N=3$ as examples to show how to find the optimal sets of thresholds.

\subsection{Computation of the optimal thresholds}

In this section, we use $N = 2$ and $N = 3$ as an example to illustrate how to find the optimal sets of thresholds $\hat{\eta}_1, \hat{\eta}_2, \ldots, \hat{\eta}_N$ in both truncated sampling case and non-truncated sampling case.

\subsubsection{Two samples case}

For $\tau_2 \le t < T$, given $\tau_{\eta_1} = \inf_{t \ge 0}\{t: |B^H_t|\ge \eta_1\}$,  $\tau_{\eta_{2}} = \inf_{t \ge \tau_{1}}\{t: |B^H_t - B^H_{\tau_{1}}|\ge \eta_{2}\}$ and $\tau_i = \tau_{\eta_i} \land T$ for $i = 1, 2$,  based on Equation \eqref{eq:2}, the estimation $\hat{B}_t^H$ is
\begin{align*}
\hat{B}_t^H &= C(\tau_1, \tau_2, H)(A_1(\tau_1, \tau_2, t, H)B_{\tau_1}^H - A_2(\tau_1, \tau_2, t, H)B_{\tau_2}^H),
\end{align*}
where $C(\tau_1, \tau_2, H)$, $A_1(\tau_1, \tau_2, t, H)$ and $A_2(\tau_1, \tau_2, t, H)$ are given in Section \ref{appendix:corollary1}. 

\begin{cor}
\label{prop-level2}
When the underlying process is a fractional Brownian motion with Hurst parameter $H\in(0,1)$, the modified aggregate squared error distortion for level-triggered sampling two samples case with thresholds $\eta_1, \eta_2$ is:
\begin{align*}
J^*(\eta_1, \eta_2) &= \frac{T^{2H+1}}{2H+1} \\
&- \mathbb{E}\bigg[\frac{(B_{\tau_1}^H)^2}{4}\big[(\tau_2-\tau_{1}) + \frac{\tau_2^{4H+1}/\tau_1^{4H}-\tau_1}{4H+1} + \frac{2(\tau_2^{2H+1}-\tau_1^{2H+1})/\tau_1^{2H}}{2H+1} \\
& +\frac{(\tau_2 - \tau_1)^{4H+1}/\tau_1^{4H}}{4H+1} - 2\frac{(\tau_2 - \tau_1)^{2H+1}/\tau_1^{2H}}{2H+1} - 2\tau_1\int_1^{\frac{\tau_2}{\tau_1}}\mu^{2H}(\mu-1)^{2H}d\mu
\big] \\
&+ C(\tau_1, \tau_2, H)^2\big[(B_{\tau_1}^H)^2 \int_{\tau_2}^T A_1^2(\tau_1, \tau_2, t, H)dt + (B_{\tau_2}^H)^2 \int_{\tau_2}^T A_2^2(\tau_1, \tau_2, t, H)dt\\
& - 2B^H_{\tau_1}B^H_{\tau_2}\int_{\tau_2}^T A_1(\tau_1, \tau_2, t, H)A_2(\tau_1, \tau_2, t, H)dt\big]\bigg]
\end{align*}
where $\tau_{\eta_1} = \inf_{t \ge 0}\{t: |B^H_t|\ge \eta_1\}$,  $\tau_{\eta_{2}} = \inf_{t \ge \tau_{1}}\{t: |B^H_t - B^H_{\tau_{1}}|\ge \eta_{2}\}$ and $\tau_i = \tau_{\eta_i} \land T$ for $i = 1, 2$.
\end{cor}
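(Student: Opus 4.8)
The plan is to specialize the modified distortion \eqref{eq:modified-distortion} to $N=2$ and to evaluate the two resulting interval integrals separately, bearing in mind that the outer expectation now runs over the random stopping times as well as over the fBm path. With the convention $\tau_3=T$, the $N=2$ instance of \eqref{eq:modified-distortion} is
\begin{align*}
J^*(\eta_1,\eta_2) = \frac{T^{2H+1}}{2H+1} - \mathbb{E}\left[\int_{\tau_1}^{\tau_2}(\hat{B}_t^H)^2\,dt + \int_{\tau_2}^{T}(\hat{B}_t^H)^2\,dt\right],
\end{align*}
so the task reduces to computing the two inner integrals pathwise and then applying $\mathbb{E}$.

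On the first interval $[\tau_1,\tau_2)$ only the single sample $B_{\tau_1}^H$ has been collected, so the estimator is exactly the one-sample predictor $\hat{B}_t^H=\frac{B_{\tau_1}^H}{2\tau_1^{2H}}(t^{2H}+\tau_1^{2H}-|t-\tau_1|^{2H})$. Hence $\int_{\tau_1}^{\tau_2}(\hat{B}_t^H)^2\,dt$ is literally the integral appearing in the proof of Proposition \ref{prop:one}, but with its upper endpoint $T$ replaced by $\tau_2$. I would quote that calculation verbatim to produce the first bracket in the statement, i.e.\ $\frac{(B_{\tau_1}^H)^2}{4}$ times the displayed expression in $\tau_1,\tau_2$. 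In contrast to Proposition \ref{prop:one}, I would \emph{not} substitute $(B_{\tau_1}^H)^2=\eta_1^2$: the cross term generated by the second interval cannot be collapsed to a deterministic constant, so it is cleaner to carry $(B_{\tau_1}^H)^2$ inside the expectation throughout.

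On the second interval $[\tau_2,T)$ both samples are available and, by \eqref{eq:2}, the estimator has the stated form $\hat{B}_t^H=C(\tau_1,\tau_2,H)\bigl(A_1(\tau_1,\tau_2,t,H)B_{\tau_1}^H-A_2(\tau_1,\tau_2,t,H)B_{\tau_2}^H\bigr)$. The crucial observation is that $C$, $B_{\tau_1}^H$ and $B_{\tau_2}^H$ are constant in the integration variable $t$, whereas only $A_1$ and $A_2$ depend on $t$. Expanding the square and pulling the $t$-independent factors out of the $dt$-integral yields
\begin{align*}
\int_{\tau_2}^{T}(\hat{B}_t^H)^2\,dt = C^2\Bigl[(B_{\tau_1}^H)^2\!\int_{\tau_2}^{T}\!A_1^2\,dt+(B_{\tau_2}^H)^2\!\int_{\tau_2}^{T}\!A_2^2\,dt-2B_{\tau_1}^H B_{\tau_2}^H\!\int_{\tau_2}^{T}\!A_1 A_2\,dt\Bigr],
\end{align*}
where $C=C(\tau_1,\tau_2,H)$ and $A_j=A_j(\tau_1,\tau_2,t,H)$; this is precisely the second bracket. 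Adding the two interval contributions and taking $\mathbb{E}$ produces the claimed identity.

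The one point that genuinely requires care---and the real difference from the deterministic Corollary \ref{cor-deter-2}---is that $\tau_1$ and $\tau_2$ are now stopping times correlated with $B_{\tau_1}^H$ and $B_{\tau_2}^H$, so I cannot take the inner expectations and replace $(B_{\tau_1}^H)^2$, $(B_{\tau_2}^H)^2$, $B_{\tau_1}^H B_{\tau_2}^H$ by their deterministic analogues $\tau_1^{2H}$, $\tau_2^{2H}$, $\tfrac12(\tau_1^{2H}+\tau_2^{2H}-|\tau_2-\tau_1|^{2H})$ as was done there. The correct bookkeeping is to condition on $\mathcal{F}_{\tau_2}$: given the realized sampling times and sampled values, $\hat{B}_t^H$ is an explicit deterministic function of $t$ on each interval, both interval integrals are ordinary integrals, and only the outer expectation survives. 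I would also record the boundary check that on $\{\tau_{\eta_1}>T\}$ or $\{\tau_{\eta_2}>T\}$ the corresponding interval degenerates ($\tau_1=\tau_2=T$, respectively $\tau_2=T$) and its integrand vanishes, so the formula holds with no separate case analysis; this verification, rather than the algebra, is where I expect the only real subtlety to lie.
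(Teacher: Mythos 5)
Your proposal is correct and follows essentially the same route as the paper's own proof: split the modified distortion at $\tau_2$, reuse the one-sample integral from Proposition \ref{prop:one} with upper limit $\tau_2$ for the first interval, and expand the square of the two-sample predictor on $[\tau_2,T)$, pulling the $t$-independent factors $C$, $B_{\tau_1}^H$, $B_{\tau_2}^H$ outside the $dt$-integrals before taking the outer expectation. Your added remarks on keeping $(B_{\tau_1}^H)^2$ inside the expectation and on the degenerate boundary cases are sound refinements of the same argument rather than a different method.
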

The proof of Corollary \ref{prop-level2} is shown in Section \ref{appendix:prop6}. Using Broadie method \citep{broadie2014}, we obtain the optimal thresholds $\hat{\eta}_1$ and $\hat{\eta}_2$ that minimize $J^*(\eta_1, \eta_2)$ in Corollary \ref{prop-level2}. For $H =0.4, 0.6 \text{ and } 0.7$, the optimal thresholds $(\hat{\eta}_1, \hat{\eta}_2)$ and corresponding minimized distortion is given in Table \ref{table-level-two-samples-non}. Using the same method, we can also obtain the optimal thresholds $\hat{\eta}_1$ and $\hat{\eta}_2$ for the truncated level-triggered sampling by minimizing the distortion in Proposition \ref{prop-two-sample-trunc} with $N = 2$. The corresponding optimal thresholds $(\hat{\eta}_1, \hat{\eta}_2)$ and the minimized distortion are given in Table \ref{table-level-two-samples-trunc}. We also show the results of the minimized distortion given by the truncated and non-truncated two samples case in Figure \ref{level_two_loss_compa}. For $H = 0.1, 0.2 \text{ and } 0.3$, we find including the second allowed sample can not help to improve the estimation accuracy and decrease the distortion. As a result, we do not include the results here.

\begin{table}[h]
\begin{center}
\begin{tabular}{| c | c | c |}
\hline
Hurst & $(\hat{\eta}_1, \hat{\eta}_2)$ & Distortion \\
\hline
0.4 & $(0.693T^{\frac{1}{4}(2H+1)}, 0.991(T-\tau_1)^{\frac{1}{4}(2H+1)})$ & 35.093   \\
\hline
0.6 & $(0.658T^{\frac{1}{4}(2H+1)}, 0.930(T-\tau_1)^{\frac{1}{4}(2H+1)})$ & 58.133   \\
\hline
0.7 & $(0.627T^{\frac{1}{4}(2H+1)}, 1.023(T-\tau_1)^{\frac{1}{4}(2H+1)})$ & 69.997 \\
\hline
\end{tabular}
\caption{\label{table-level-two-samples-non}Optimal thresholds $(\hat{\eta}_1, \hat{\eta}_2)$ and corresponding minimized distortions in the two samples non-truncated case for Hurst parameter $H = 0.4, 0.6,  0.7$ when $T = 20$.}
\end{center}
\end{table}


\begin{table}[h]
\begin{center}
\begin{tabular}{| c | c | c |}
\hline
Hurst & $(\hat{\eta}_1, \hat{\eta}_2)$ & Distortion \\
\hline
0.4 & $(0.666T^{\frac{1}{4}(2H+1)}, 1.013(T-\tau_1)^{\frac{1}{4}(2H+1)})$ & 35.820   \\
\hline
0.6 & $(0.668T^{\frac{1}{4}(2H+1)}, 0.941(T-\tau_1)^{\frac{1}{4}(2H+1)})$ & 59.582   \\
\hline
0.7 & $(0.649T^{\frac{1}{4}(2H+1)}, 1.039(T-\tau_1)^{\frac{1}{4}(2H+1)})$ & 77.191  \\
\hline
\end{tabular}
\caption{\label{table-level-two-samples-trunc}Optimal thresholds $(\hat{\eta}_1, \hat{\eta}_2)$ and corresponding minimized distortions in the two truncated samples case for Hurst parameter $H = 0.4, 0.6,  0.7$ when $T = 20$.}
\end{center}
\end{table}



\begin{figure}[h]
\centering
  \includegraphics[width=0.5\linewidth]{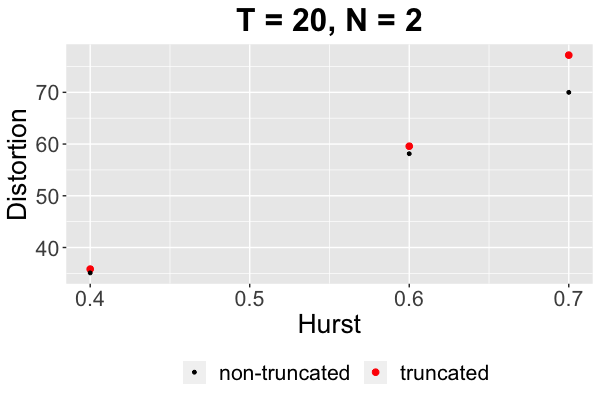}
  \caption{Optimal distortions given the optimal thresholds  for $H = 0.4, 0.6, 0.7$ when $T = 20$ for the two samples case. The black point represents the optimal distortion for the non-truncated two samples case for the corresponding Hurst parameter. The red point represents the optimal distortion for the truncated two samples case for the corresponding Hurst parameter.}
  \label{level_two_loss_compa}
\end{figure}

Based on the results in Table \ref{table-level-two-samples-non} and Table \ref{table-level-two-samples-trunc}, we can see the estimated optimal $\hat{q}_1$ and $\hat{q}_2$ have similar values for different $H$'s. Considering $T$ and $(T - \tau_{1})$ are usually some values that are bigger than 1, this shows, as the Hurst parameter $H$ increases, the optimal thresholds become larger. This matches the same trend as that shown in the one sample case. Also, non-truncated and truncated two samples case tend to have similar optimal thresholds. The non-truncated two samples case tends to have slightly smaller optimal distortion.

\subsubsection{Three samples case}

For $\tau_3 \le t < T$, given $\tau_{\eta_1} = \inf_{t \ge 0}\{t: |B^H_t|\ge \eta_1\}$,  $\tau_{\eta_{i}} = \inf_{t \ge \tau_{i-1}}\{t: |B^H_t - B^H_{\tau_{i-1}}|\ge \eta_{i}\}$ for $i=2, 3$ and $\tau_i = \tau_{\eta_i} \land T$ for $i = 1, 2, 3$, the estimation is
\begin{align*}
\hat{B}_t^H = C(\tau_1, \tau_2, \tau_3, H)(A_1(\tau_1, \tau_2, \tau_3, t, H)B_{\tau_1}^H + A_2(\tau_1, \tau_2, \tau_3, t, H)B_{\tau_2}^H  + A_3(\tau_1, \tau_2, \tau_3, t, H)B_{\tau_3}^H),
\end{align*}
where $C(\tau_1, \tau_2, \tau_3, H)$, $A_1(\tau_1, \tau_2, \tau_3, t, H)$, $A_2(\tau_1, \tau_2, \tau_3, t, H)$ and $A_3(\tau_1, \tau_2, \tau_3, t, H)$ are given in Section \ref{appendix:corollary2}.

\begin{cor}
\label{prop-level3}
When the underlying process is a fractional Brownian motion with Hurst parameter $H\in(0,1)$, the modified aggregate squared error distortion for level-triggered sampling three samples case with thresholds $\eta_1, \eta_2, \eta_3$ is:
\label{equ:three_non_dis}
\begin{align*}
&J^*(\eta_1, \eta_2, \eta_3) = \frac{T^{2H+1}}{2H+1} \\
&- \mathbb{E}\bigg[\frac{(B_{\tau_1}^H)^2}{4} \big((\tau_2-\tau_{1}) + \frac{\tau_2^{4H+1}/\tau_1^{4H}-\tau_1}{4H+1} + \frac{2(\tau_2^{2H+1}-\tau_1^{2H+1})/\tau_1^{2H}}{2H+1} \\
& +\frac{(\tau_2 - \tau_1)^{4H+1}/\tau_1^{4H}}{4H+1} - 2\frac{(\tau_2 - \tau_1)^{2H+1}/\tau_1^{2H}}{2H+1} - 2\tau_1\int_1^{\frac{\tau_2}{\tau_1}}\mu^{2H}(\mu-1)^{2H}d\mu \big)
 \\
&+ C(\tau_1, \tau_2, H)^2\big[(B_{\tau_1}^H)^2 \int_{\tau_2}^{\tau_3} A_1^2(\tau_1, \tau_2, t, H)dt + (B_{\tau_2}^H)^2 \int_{\tau_2}^{\tau_3} A_2^2(\tau_1, \tau_2, t, H)dt\\
& - 2B^H_{\tau_1}B^H_{\tau_2}\int_{\tau_2}^{\tau_3} A_1(\tau_1, \tau_2, t, H)A_2(\tau_1, \tau_2, t, H)dt \big]  \\
& + C(\tau_1, \tau_2, \tau_3, H)^2 \big[ (B_{\tau_1}^H)^2 \int_{\tau_3}^{T} A_1(\tau_1, \tau_2, \tau_3, t, H)^2 dt + (B_{\tau_2}^H)^2 \int_{\tau_3}^{T} A_2(\tau_1, \tau_2, \tau_3, t, H)^2 dt \\
&+ (B_{\tau_3}^H)^2 \int_{\tau_3}^{T} A_3(\tau_1, \tau_2, \tau_3, t, H)^2 dt + 2B_{\tau_1}^HB_{\tau_2}^H  \int_{\tau_3}^{T} A_1(\tau_1, \tau_2, \tau_3, t, H)A_2(\tau_1, \tau_2, \tau_3, t, H) dt \\
&+ 2B_{\tau_1}^HB_{\tau_3}^H \int_{\tau_3}^{T} A_1(\tau_1, \tau_2, \tau_3, t, H)A_3(\tau_1, \tau_2, \tau_3, t, H) dt  \\
&+ 2B_{\tau_2}^HB_{\tau_3}^H \int_{\tau_3}^{T} A_2(\tau_1, \tau_2, \tau_3, t, H)A_3(\tau_1, \tau_2, \tau_3, t, H) dt \big]\bigg],
\end{align*}
where $\tau_{\eta_1} = \inf_{t \ge 0}\{t: |B^H_t|\ge \eta_1\}$,  $\tau_{\eta_{i}} = \inf_{t \ge \tau_{i-1}}\{t: |B^H_t - B^H_{\tau_{i-1}}|\ge \eta_{i}\}$ for $i=2, 3$ and $\tau_i = \tau_{\eta_i} \land T$ for $i = 1, 2, 3$.
\end{cor}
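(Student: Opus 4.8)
The plan is to start from the modified distortion of Equation \eqref{eq:modified-distortion} specialized to $N=3$, namely
\begin{align*}
J^*(\eta_1,\eta_2,\eta_3) = \frac{T^{2H+1}}{2H+1} - \mathbb{E}\left[\sum_{i=1}^{3}\int_{\tau_i}^{\tau_{i+1}}(\hat{B}_t^H)^2\,dt\right],
\end{align*}
with the convention $\tau_4 = T$, and to evaluate the sum one interval at a time. On each interval the estimator $\hat{B}_t^H$ is conditioned on a different number of past samples, so the three summands correspond exactly to the three bracketed expressions in the statement, and the task reduces to expanding $(\hat{B}_t^H)^2$ on each piece and integrating in $t$.

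On $[\tau_1,\tau_2)$ the estimator depends only on $B^H_{\tau_1}$, giving $\hat{B}_t^H = \frac{B_{\tau_1}^H}{2\tau_1^{2H}}(t^{2H}+\tau_1^{2H}-|t-\tau_1|^{2H})$; squaring and integrating in $t$ is the identical calculation performed in the proof of Proposition \ref{prop:one}, but with the upper limit $T$ replaced by $\tau_2$, so it produces the first bracket with prefactor $(B_{\tau_1}^H)^2/4$. On $[\tau_2,\tau_3)$ the estimator is the two-sample expression $\hat{B}_t^H = C(\tau_1,\tau_2,H)(A_1 B_{\tau_1}^H - A_2 B_{\tau_2}^H)$; expanding the square (which produces the cross term with its minus sign) and integrating in $t$ reproduces the computation already carried out for Corollary \ref{prop-level2}, now with upper limit $\tau_3$ instead of $T$, yielding the second bracket. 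The only genuinely new piece is $[\tau_3,T)$, where $\hat{B}_t^H = C(\tau_1,\tau_2,\tau_3,H)(A_1 B_{\tau_1}^H + A_2 B_{\tau_2}^H + A_3 B_{\tau_3}^H)$; expanding this square gives three diagonal terms $A_j^2 (B_{\tau_j}^H)^2$ and three cross terms $2A_jA_k B_{\tau_j}^H B_{\tau_k}^H$, and integrating each against $dt$ over $[\tau_3,T]$ produces the six integrals in the third bracket.

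The one structural point that requires care — and the reason the result is stated as an expectation of a random integrand rather than in closed form — is that the sampling times $\tau_1,\tau_2,\tau_3$ are now stopping times depending on the trajectory. Within each interval, however, the sampled values $B_{\tau_1}^H,B_{\tau_2}^H,B_{\tau_3}^H$ and the coefficients $C$ and $A_j$ (which depend only on $\tau_1,\ldots,\tau_i$) are constant in the integration variable $t$, so they factor out of the $t$-integrals; I would justify this by noting that on $[\tau_i,\tau_{i+1})$ these quantities are measurable with respect to $\mathcal{F}_{\tau_i}$ while only the explicit $t$-dependence inside $A_j(\cdot,t,H)$ varies. Crucially, unlike the deterministic setting of Corollary \ref{cor-deter-3}, the products $B_{\tau_j}^H B_{\tau_k}^H$ cannot be replaced by their covariances $\tfrac{1}{2}(\tau_j^{2H}+\tau_k^{2H}-|\tau_j-\tau_k|^{2H})$, since $\tau_j$ and $\tau_k$ are random and correlated with the sampled values, and so they must remain inside the outer expectation. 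Collecting the three bracketed terms under a single expectation then gives the stated formula. I expect the main obstacle to be purely organizational — tracking the signs in the two- and three-sample expansions and arranging the six cross-integrals in the final interval — rather than any analytic difficulty.
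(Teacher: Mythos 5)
Your proposal is correct and follows essentially the same route as the paper: it specializes Equation \eqref{eq:modified-distortion} to $N=3$ with $\tau_4=T$, reuses the one- and two-sample interval computations from Corollary \ref{prop-level2} with the upper limit replaced by $\tau_3$, and expands the square of the three-sample estimator on $[\tau_3,T)$ into the three diagonal and three cross integrals, keeping the products $B_{\tau_j}^H B_{\tau_k}^H$ inside the outer expectation. Your observation that these products cannot be replaced by covariances because the $\tau_i$ are trajectory-dependent stopping times is exactly the point that distinguishes this corollary from the deterministic Corollary \ref{cor-deter-3}, and it matches the paper's treatment.
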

The proof of Corollary \ref{prop-level3} is shown in Section \ref{appendix:prop7}. Using Broadie's method \citep{broadie2014}, we find the optimal thresholds that minimize $J^*(\eta_1, \eta_2, \eta_3)$ in Corollary \ref{equ:three_non_dis}. For $H =0.1, 0.2, \ldots, 0.7$, the optimal thresholds $(\hat{\eta}_1, \hat{\eta}_2, \hat{\eta}_3)$ and the corresponding minimized distortion are given in Table \ref{table-level-three-samples-non}. Using the same method, we also obtain the optimal thresholds for the truncated level-triggered sampling by minimizing the distortion in Proposition \ref{prop-two-sample-trunc} with $N = 3$. The corresponding optimal thresholds $(\hat{\eta}_1, \hat{\eta}_2, \hat{\eta}_3)$ and minimized distortion are given in Table \ref{table-level-three-samples-trunc}. We show the results of the minimized distortion given by the truncated and non-truncated three samples case in Figure \ref{level_three_loss_comp}.

\begin{table}[h]
\begin{center}
\begin{tabular}{| c | c | c |}
\hline
Hurst & $(\hat{\eta}_1, \hat{\eta}_2, \hat{\eta}_3)$ & Distortion \\
\hline
0.1 & $(0.775T^{\frac{1}{4}(2H+1)}, 0.617(T-\tau_1)^{\frac{1}{4}(2H+1)}, 1.051(T-\tau_2)^{\frac{1}{4}(2H+1)})$ & 2.259   \\
\hline
0.2 & $(0.737T^{\frac{1}{4}(2H+1)}, 0.534(T-\tau_1)^{\frac{1}{4}(2H+1)}, 0.982(T-\tau_2)^{\frac{1}{4}(2H+1)})$ & 8.049   \\
\hline
0.3 & $(0.719T^{\frac{1}{4}(2H+1)}, 0.492(T-\tau_1)^{\frac{1}{4}(2H+1)}, 0.936(T-\tau_2)^{\frac{1}{4}(2H+1)})$ & 18.388   \\
\hline
0.4 & $(0.646T^{\frac{1}{4}(2H+1)}, 0.655(T-\tau_1)^{\frac{1}{4}(2H+1)}, 0.939(T-\tau_2)^{\frac{1}{4}(2H+1)})$ & 28.713   \\
\hline
0.6 & $(0.542T^{\frac{1}{4}(2H+1)}, 0.733(T-\tau_1)^{\frac{1}{4}(2H+1)}, 0.935(T-\tau_2)^{\frac{1}{4}(2H+1)})$ & 22.766   \\
\hline
0.7 & $(0.485T^{\frac{1}{4}(2H+1)}, 0.760(T-\tau_1)^{\frac{1}{4}(2H+1)}, 1.012(T-\tau_2)^{\frac{1}{4}(2H+1)})$ & 1.661   \\
\hline
\end{tabular}
\caption{\label{table-level-three-samples-non}Optimal thresholds $(\hat{\eta}_1, \hat{\eta}_2, \hat{\eta}_3)$ and corresponding distortions in the three samples non-truncated case for  Hurst parameter $H = 0.1, 0.2, \ldots, 0.7$ when $T = 20$.}
\end{center}
\end{table}

\begin{table}[h]
\begin{center}
\begin{tabular}{| c | c | c |}
\hline
Hurst & $(\hat{\eta}_1, \hat{\eta}_2, \hat{\eta}_3)$ & Distortion \\
\hline
0.1 & $(0.682T^{\frac{1}{4}(2H+1)}, 0.545(T-\tau_1)^{\frac{1}{4}(2H+1)},  0.983(T-\tau_2)^{\frac{1}{4}(2H+1)})$ & 0.877   \\
\hline
0.2 & $(0.629T^{\frac{1}{4}(2H+1)}, 0.489(T-\tau_1)^{\frac{1}{4}(2H+1)}, 0.927(T-\tau_2)^{\frac{1}{4}(2H+1)})$ & 8.437   \\
\hline
0.3 & $(0.616T^{\frac{1}{4}(2H+1)}, 0.513(T-\tau_1)^{\frac{1}{4}(2H+1)}, 0.918(T-\tau_2)^{\frac{1}{4}(2H+1)})$ & 20.916   \\
\hline
0.4 & $(0.579T^{\frac{1}{4}(2H+1)}, 0.647(T-\tau_1)^{\frac{1}{4}(2H+1)}, 0.957(T-\tau_2)^{\frac{1}{4}(2H+1)})$ & 29.020   \\
\hline
0.6 & $(0.529T^{\frac{1}{4}(2H+1)}, 0.705(T-\tau_1)^{\frac{1}{4}(2H+1)}, 0.971(T-\tau_2)^{\frac{1}{4}(2H+1)})$ & 25.803   \\
\hline
0.7 & $(0.479T^{\frac{1}{4}(2H+1)}, 0.756(T-\tau_1)^{\frac{1}{4}(2H+1)}, 1.069(T-\tau_2)^{\frac{1}{4}(2H+1)})$ & 6.553   \\
\hline
\end{tabular}
\caption{\label{table-level-three-samples-trunc}Optimal thresholds $(\hat{\eta}_1, \hat{\eta}_2, \hat{\eta}_3)$ and corresponding distortions in the three samples truncated case for  Hurst parameter $H = 0.1, 0.2, \ldots, 0.7$ when $T = 20$.}
\end{center}
\end{table}

\begin{figure}[h]
\centering
  \includegraphics[width=0.5\linewidth]{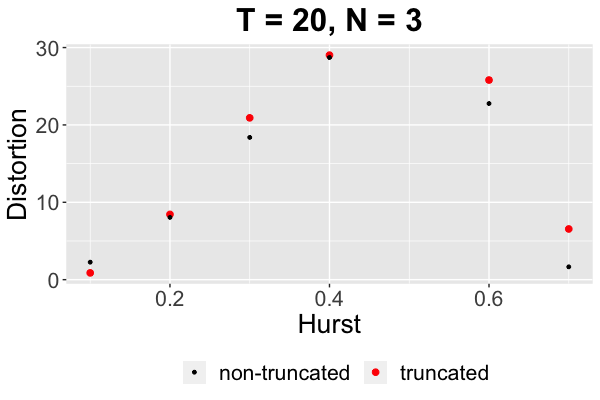}
  \caption{Optimal distortions given the optimal thresholds for $H = 0.1, 0.2, \ldots, 0.7$ when $T = 20$. The black point represents the optimal distortion for the non-truncated three samples case with the corresponding Hurst parameter. The red point represents the optimal distortion for the truncated three samples case with the corresponding Hurst parameter.}
  \label{level_three_loss_comp}
\end{figure}

From Tables \ref{table-level-three-samples-non} and \ref{table-level-three-samples-trunc}, we can see, for different $H$, the estimated optimal $\hat{q}_i$ does vary. For $\hat{\eta}_2$ and $\hat{\eta}_3$, $T - \tau_{2}$ and $T - \tau_{3}$ are values that are bigger than 1, and $\hat{q}_2$'s and $\hat{q}_3$'s are similar for different $H$'s as $H$ increases. These show that, as the Hurst parameter $H$ increases, the optimal thresholds $\hat{\eta}_2$ and $\hat{\eta}_3$ become larger. For $\hat{\eta}_1$, even though $\hat{q}_1$ has a decreasing trend as $H$ increases, but the the dominant part in $\hat{\eta}_1$ is the $T^{\frac{1}{4}(2H+1)}$ which includes a whole time horizon. This shows that, as the Hurst parameter $H$ increases, the optimal threshold $\hat{\eta}_1$ increases. This matches the same trend as those in the one sample and two samples case. Also, for non-truncated and truncated case, they tend to have similar optimal thresholds.  For the minimized distortion, the non-truncated case has a smaller distortion in most cases. 

\section{Discussion}
\label{section:conclusion}

In this paper, we propose two different classes of sampling schemes for the estimation of the fractional Brownian motion. By minimizing the aggregate squared error distortion, we find the optimal sampling strategy for each sampling scheme when different numbers of samples are allowed. We observe a certain relation between the optimal sampling strategy and the Hurst parameter given a certain number of allowed samples.

For the deterministic sampling, the sampling time sequence is deterministic and independent of the actual trajectory of the fBm process. We can directly optimize the distortion to find the optimal sampling times. Based on our results, when the Hurst parameter keeps further away from $H = \frac{1}{2}$, the optimal sampling times tend to be smaller. We also investigate a truncated version of this problem by doing the estimation using only the last observed value. This truncated version gives similar optimal sampling times and minimized distortions, indicating that it has good accuracy with less computational complexity.

For the level-triggered sampling, the sampling time sequence is stochastic and depends on the actual fBm trajectory. We use the stochastic approximation method to find the optimal thresholds that minimize the distortion. For the multiple samples case, it is not feasible to directly find the optimal thresholds. We estimate the optimal thresholds with a proposed structure. Based on our results, when the Hurst parameter keeps increasing, the optimal thresholds have an increasing trend. We still investigate a truncated version of this problem by doing the estimation using only the last observed value. We find this truncated version still gives similar optimal thresholds and distortions compared with the non-truncated version. This shows that using this truncated version estimation for level-triggered sampling can still have a good performance with less computational complexity.

In this paper, we focus on the signal to be the fractional Brownian motion. A future direction is to consider the multiple sampling problems for the fractional Ornstein-Uhlenbeck process. The fractional Ornstein-Uhlenbeck process is the fractional analogue of the Ornstein-Uhlenbeck process. 
The existing multiple sampling framework for the Ornstein-Uhlenbeck process \citep{rabi2006multiple} can not be extended to the fractional Ornstein-Uhlenbeck process directly. It is of interest to extend our work and study the multiple sampling methods for the fractional Ornstein-Uhlenbeck process.

\section{Proofs and Derivations}

\subsection{Proof of Corollary \ref{cor-deter-2}}
\label{appendix:corollary1}
\begin{proof}
For $\tau_1 \le t < \tau_2$, the estimation $\hat{B}_t^H$ is
\begin{align*}
\hat{B}_t^H = \mathbb{E}(B_t^H|\mathcal{F}_{\tau_1}) = \mathbb{E}(B_t^H|B^H_{\tau_1}) = \frac{B_{\tau_1}^H}{2\tau_1^{2H}}(t^{2H}+\tau_1^{2H}-|t-\tau_1|^{2H}).
\end{align*}
For $\tau_2 \le t < T$, given
\begin{align*}
\Sigma & = \begin{pmatrix}
t^{2H} &  \frac{1}{2}(t^{2H}+\tau_1^{2H}- (t-\tau_1)^{2H}) &  \frac{1}{2}(t^{2H}+\tau_2^{2H}- (t-\tau_2)^{2H})\\
\frac{1}{2}(t^{2H}+\tau_1^{2H}- (t-\tau_1)^{2H}) & \tau_1^{2H} & \frac{1}{2}(\tau_1^{2H}+\tau_2^{2H}- (\tau_2-\tau_1)^{2H})\\
 \frac{1}{2}(t^{2H}+\tau_2^{2H}- (t-\tau_2)^{2H}) & \frac{1}{2}(\tau_1^{2H}+\tau_2^{2H}- (\tau_2-\tau_1)^{2H}) & \tau_2^{2H} 
\end{pmatrix}\\
& = \begin{pmatrix}
\Sigma_{11} & \Sigma_{12}\\
\Sigma_{12} & \Sigma_{22}
\end{pmatrix},
\end{align*}
and Equation \eqref{eq:2}, the estimation $\hat{B}_t^H$ is:
\begin{align*}
\hat{B}_t^H &= \Sigma_{12}\Sigma_{22}^{-1}
\begin{pmatrix}
B_{\tau_1}^H\\
B_{\tau_2}^H
\end{pmatrix} \\
=& 
\begin{pmatrix}
 \frac{1}{2}(t^{2H}+\tau_1^{2H}- (t-\tau_1)^{2H}) &  \frac{1}{2}(t^{2H}+\tau_2^{2H}- (t-\tau_2)^{2H})
\end{pmatrix} \cdot \\
&\begin{pmatrix}
 \tau_1^{2H} & \frac{1}{2}(\tau_1^{2H}+\tau_2^{2H}- (\tau_2-\tau_1)^{2H})\\
 \frac{1}{2}(\tau_1^{2H}+\tau_2^{2H}- (\tau_2-\tau_1)^{2H}) & \tau_2^{2H} 
\end{pmatrix}^{-1} 
\begin{pmatrix}
B_{\tau_1}^H\\
B_{\tau_2}^H
\end{pmatrix}\\
&= C(\tau_1, \tau_2, H)(A_1(\tau_1, \tau_2, t, H)B_{\tau_1}^H - A_2(\tau_1, \tau_2, t, H)B_{\tau_2}^H),
\end{align*}
where
\begin{align*}
C(\tau_1, \tau_2, H) &= \frac{1}{\tau_1^{4H}+(\tau_2^{2H} - (-\tau_1 + \tau_2)^{2H})^2 - 2\tau_1^{2H}(\tau_2^{2H} +(-\tau_1+\tau_2)^{2H})},\\
A_1(\tau_1, \tau_2, t, H) & = 2(t-\tau_1)^{2H}\tau_2^{2H} - (t-\tau_2)^{2H}\tau_2^{2H} + \tau_2^{4H}+(t-\tau_2)^{2H}(-\tau_1+\tau_2)^{2H}-\tau_2^{2H}(-\tau_1+\tau_2)^{2H}\\
&-\tau_1^{2H}((t-\tau_2)^{2H}+\tau_2^{2H})+t^{2H}(\tau_1^{2H}-\tau_2^{2H}-(-\tau_1+\tau_2)^{2H}), \\
\text{and} \\
A_2(\tau_1, \tau_2, t, H) &= -\tau_1^{4H} + (t-\tau_1)^{2H}(\tau_2^{2H}-(-\tau_1+\tau_2)^{2H})+t^{2H}(\tau_1^{2H} - \tau_2^{2H}+(-\tau_1+\tau_2)^{2H})\\
&+\tau_1^{2H}((t-\tau_1)^{2H}-2(t-\tau_2)^{2H}+\tau_2^{2H}+(-\tau_1+\tau_2)^{2H}).
\end{align*}
Based on Equation \eqref{eq:dis-non}, the aggregate squared error distortion is
\begin{align*}
J(\tau_1, \tau_2) &= \int_{0}^T\mathbb{E}(B_t^H)^2dt - \sum_{i = 1}^2\int_{\tau_i}^{\tau_{i+1}} \mathbb{E}(\hat{B}_t^H)^2dt\\
&= \frac{T^{2H+1}}{2H+1} - \int_{\tau_1}^{\tau_{2}}\mathbb{E}\bigg(\frac{B_{\tau_1}^H}{2\tau_1^{2H}}(t^{2H}+\tau_1^{2H}-|\tau_1-t|^{2H})\bigg)^2dt\\
& -\int_{\tau_2}^{T}\mathbb{E}\big[C(\tau_1, \tau_2, H)^2[A_1(\tau_1, \tau_2, t, H)B_{\tau_1}^H - A_2(\tau_1, \tau_2, t, H)B_{\tau_2}^H]^2\big]dt,
\end{align*}
where 
\begin{align*}
\int_{\tau_1}^{\tau_{2}}\mathbb{E}\bigg(\frac{B_{\tau_1}^H}{2\tau_1^{2H}}(t^{2H}+\tau_1^{2H}-|\tau_1-t|^{2H})\bigg)^2dt = \frac{1}{4\tau_1^{2H}}\int_{\tau_1}^{\tau_2} (s^{2H}+\tau_1^{2H}-|s-\tau_1|^{2H})^2ds
\end{align*}
and
\begin{align*}
&\int_{\tau_2}^{T}\mathbb{E}\big[C(\tau_1, \tau_2, H)^2[A_1(\tau_1, \tau_2, t, H)B_{\tau_1}^H - A_2(\tau_1, \tau_2, t, H)B_{\tau_2}^H]^2\big]dt\\
& = C(\tau_1, \tau_2, H)^2\int_{\tau_2}^{T}\mathbb{E}[A_1(\tau_1, \tau_2, t, H)^2(B_{\tau_1}^H)^2 + A_2(\tau_1, \tau_2, t, H)^2(B_{\tau_2}^H)^2\\
& - 2A_1(\tau_1, \tau_2, t, H)A_2(\tau_1, \tau_2, t, H)B_{\tau_1}^HB_{\tau_2}^H]dt\\
& = C(\tau_1, \tau_2, H)^2\int_{\tau_2}^{T}[A_1(\tau_1, \tau_2, t, H)^2\tau_1^{2H} + A_2(\tau_1, \tau_2, t, H)^2\tau_2^{2H}\\
& - A_1(\tau_1, \tau_2, t, H)A_2(\tau_1, \tau_2, t, H)(\tau_2^{2H}+\tau_1^{2H}- |\tau_2-\tau_1|^{2H})]dt.
\end{align*}
\end{proof}

\subsection{Proof of Corollary \ref{cor-deter-3}}
\label{appendix:corollary2}
\begin{proof}
For $0 \le t < \tau_1$ and $\tau_1 \le t < \tau_2$, the estimation $\hat{B}_t^H$ is given in Section \ref{appendix:corollary1}. For $\tau_3 \le t < T$, the estimation $\hat{B}_t^H$ is
\begin{align}
\hat{B}_t^H  = C(\tau_1, \tau_2, \tau_3, H)(A_1(\tau_1, \tau_2, \tau_3, t, H)B_{\tau_1}^H + A_2(\tau_1, \tau_2, \tau_3, t, H)B_{\tau_2}^H  + A_3(\tau_1, \tau_2, \tau_3, t, H)B_{\tau_3}^H) \label{three-samples-level-sample}
\end{align}
where
\begin{align*}
&C(\tau_1, \tau_2, \tau_3, H)=\\
& 1/(2(\tau_2^{4H}(-\tau_1 + \tau_3)^{2H} 
+(-\tau_1 + \tau_2)^{2H}(\tau_3^{4H}+ (-\tau_1 + \tau_3)^{2H}(-\tau_2 + \tau_3)^{2H} \\
&+\tau_3^{2H}((-\tau_1 +\tau_2)^{2H} - (-\tau_1 + \tau_3)^{2H} - (-\tau_2 + \tau_3)^{2H})) + \tau_1^{4H}(-\tau_2 + \tau_3)^{2H}\\
&+\tau_1^{2H}(\tau_2^{2H}((-\tau_1 +\tau_2)^{2H} - (-\tau_1 + \tau_3)^{2H} -(-\tau_2 + \tau_3)^{2H}) \\
&- (-\tau_2 + \tau_3)^{2H}((-\tau_1 + \tau_2)^{2H} + (-\tau_1 + \tau_3)^{2H}- (-\tau_2 + \tau_3)^{2H})\\
& - \tau_3^{2H}((-\tau_1 + \tau_2)^{2H}-(-\tau_1 + \tau_3)^{2H} + (-\tau_2 + \tau_3)^{2H})) \\
&- \tau_2^{2H}(\tau_3^{2H}((-\tau_1 + \tau_2)^{2H}
   + (-\tau_1 + \tau_3)^{2H} - (-\tau_2 + \tau_3)^{2H}) \\
&   + (-\tau_1 + \tau_3)^{2H}((-\tau_1 + \tau_2)^{2H} 
- (-\tau_1 + \tau_3)^{2H} + (-\tau_2 + \tau_3)^{2H})))),
\end{align*}
\begin{align*}
& A_1(\tau_1, \tau_2, \tau_3, t, H) =\\
& -t^{2H}(-\tau_1 + \tau_2)^{2H}\tau_3^{2H}+2(t - \tau_2)^{2H}(-\tau_1 + \tau_2)^{2H}\tau_3^{2H} \\
&- (-\tau_1 + \tau_2)^{2H}(t - \tau_3)^{2H}\tau_3^{2H}-(t - \tau_1)^{2H}\tau_3^{4H} + (t -\tau_2)^{2H}\tau_3^{4H} \\
&+ (-\tau_1 +\tau_2)^{2H}\tau_3^{4H}+t^{2H}\tau_3^{2H}(-\tau_1 + \tau_3)^{2H} - (t - \tau_2)^{2H}\tau_3^{2H}(-\tau_1 + \tau_3)^{2H}\\
 & -t^{2H}(-\tau_1 + \tau_2)^{2H}(-\tau_2 + \tau_3)^{2H} + (-\tau_1 + \tau_2)^{2H}(t - \tau_3)^{2H}(-\tau_2 + \tau_3)^{2H}\\
 &-t^{2H}\tau_3^{2H}(-\tau_2 + \tau_3)^{2H} + 2(t - \tau_1)^{2H}\tau_3^{2H}(-\tau_2 + \tau_3)^{2H}-(t - \tau_2)^{2H}\tau_3^{2H}(-\tau_2 + \tau_3)^{2H} \\
 &- (-\tau_1 + \tau_2)^{2H}\tau_3^{2H}(-\tau_2 + \tau_3)^{2H} -t^{2H}(-\tau_1 + \tau_3)^{2H}(-\tau_2 + \tau_3)^{2H} \\
 &+ (t - \tau_2)^{2H}(-\tau_1 + \tau_3)^{2H}(-\tau_2 + \tau_3)^{2H} +t^{2H}(-\tau_2 + \tau_3)^{4H} \\
 &- (t - \tau_1)^{2H}(-\tau_2 + \tau_3)^{4H}+\tau_2^{4H}(-(t - \tau_1)^{2H} + (t - \tau_3)^{2H}+(-\tau_1 + \tau_3)^{2H}) \\
 &+ \tau_1^{2H}(\tau_2^{2H}((t - \tau_2)^{2H} - (t - \tau_3)^{2H} - (-\tau_2 + \tau_3)^{2H}) + (-\tau_2 + \tau_3)^{2H}(2t^{2H}\\
 &-(t - \tau_2)^{2H} - (t -\tau_3)^{2H} + (-\tau_2 + \tau_3)^{2H}) - \tau_3^{2H}((t - \tau_2)^{2H} - (t - \tau_3)^{2H}\\
 &+(-\tau_2 + \tau_3)^{2H})) - \tau_2^{2H}((-\tau_1 + \tau_2)^{2H}(t - \tau_3)^{2H} + (t - \tau_2)^{2H}(-\tau_1 + \tau_3)^{2H}\\
 &-2(t -\tau_3)^{2H}(-\tau_1 + \tau_3)^{2H} - 2(t - \tau_1)^{2H}(-\tau_2 + \tau_3)^{2H} + (t - \tau_3)^{2H}(-\tau_2 + \tau_3)^{2H} \\
 &+(-\tau_1 + \tau_3)^{2H}(-\tau_2 + \tau_3)^{2H} + \tau_3^{2H}(-2(t - \tau_1)^{2H} + (t - \tau_2)^{2H} + (-\tau_1 + \tau_2)^{2H} \\
 &+(t - \tau_3)^{2H} + (-\tau_1 + \tau_3)^{2H} - 2(-\tau_2 + \tau_3)^{2H}) + t^{2H}(-(-\tau_1 +\tau_2)^{2H} + (-\tau_1 + \tau_3)^{2H} + (-\tau_2 + \tau_3)^{2H})),
\end{align*}
\begin{align*}
& A_2(\tau_1, \tau_2, \tau_3, t, H) =\\
&-(t^{2H}(-\tau_1 + \tau_2)^{2H}\tau_3^{2H} - 2(t - \tau_1)^{2H}(-\tau_1 + \tau_2)^{2H}\tau_3^{2H} +\\
& (-\tau_1 + \tau_2)^{2H}(t - \tau_3)^{2H}\tau_3^{2H} - (t - \tau_1)^{2H}\tau_3^{4H} + (t - \tau_2)^{2H}\tau_3^{4H} - \\
& (-\tau_1 + \tau_2)^{2H}\tau_3^{4H} + t^{2H}(-\tau_1 + \tau_2)^{2H}(-\tau_1 + \tau_3)^{2H} - (-\tau_1 + \tau_2)^{2H}(t - \tau_3)^{2H}(-\tau_1 +\tau_3)^{2H} + \\
&t^{2H}\tau_3^{2H}(-\tau_1 + \tau_3)^{2H} + (t - \tau_1)^{2H}\tau_3^{2H}(-\tau_1 + \tau_3)^{2H} - \\
&2(t - \tau_2)^{2H}\tau_3^{2H}(-\tau_1 + \tau_3)^{2H} + (-\tau_1 + \tau_2)^{2H}\tau_3^{2H}(-\tau_1 + \tau_3)^{2H} -\\
&t^{2H}(-\tau_1 + \tau_3)^{4H} + (t - \tau_2)^{2H}(-\tau_1 + \tau_3)^{4H} -t^{2H}\tau_3^{2H}(-\tau_2 + \tau_3)^{2H} + (t - \tau_1)^{2H}\tau_3^{2H}(-\tau_2 + \tau_3)^{2H} +\\
& t^{2H}(-\tau_1 + \tau_3)^{2H}(-\tau_2 + \tau_3)^{2H}- (t - \tau_1)^{2H}(-\tau_1 + \tau_3)^{2H}(-\tau_2 + \tau_3)^{2H} + \tau_1^{4H}((t - \tau_2)^{2H} -\\ 
& (t - \tau_3)^{2H} - (-\tau_2 + \tau_3)^{2H}) + \tau_2^{2H}((-\tau_1 + \tau_3)^{2H}(-2t^{2H} + (t - \tau_1)^{2H} + (t - \tau_3)^{2H}-\\
& (-\tau_1 + \tau_3)^{2H}) + \tau_3^{2H}((t - \tau_1)^{2H} - (t - \tau_3)^{2H} + (-\tau_1 + \tau_3)^{2H})) + \tau_1^{2H}((-\tau_1 + \tau_2)^{2H}(t - \tau_3)^{2H} +\\
&  (t - \tau_1)^{2H}\tau_3^{2H} - 2(t - \tau_2)^{2H}\tau_3^{2H} + (-\tau_1 + \tau_2)^{2H}\tau_3^{2H} + (t - \tau_3)^{2H}\tau_3^{2H} - \\
& 2(t - \tau_2)^{2H}(-\tau_1 + \tau_3)^{2H} + (t - \tau_3)^{2H}(-\tau_1 + \tau_3)^{2H} - 2\tau_3^{2H}(-\tau_1 + \tau_3)^{2H} + (t - \tau_1)^{2H}(-\tau_2 + \tau_3)^{2H} - \\
&  2(t - \tau_3)^{2H}(-\tau_2 + \tau_3)^{2H} + \tau_3^{2H}(-\tau_2 + \tau_3)^{2H} + (-\tau_1 + \tau_3)^{2H}(-\tau_2 + \tau_3)^{2H} + \tau_2^{2H}(-(t - \tau_1)^{2H} + \\
&(t - \tau_3)^{2H} + (-\tau_1 + \tau_3)^{2H}) + t^{2H}(-(-\tau_1 + \tau_2)^{2H} + (-\tau_1 + \tau_3)^{2H} + (-\tau_2 +\tau_3)^{2H}))),
\end{align*}
\begin{align*}
& A_3(\tau_1, \tau_2, \tau_3, t, H) =\\
&\tau_2^{4H}((t - \tau_1)^{2H} - (t - \tau_3)^{2H} + (-\tau_1 + \tau_3)^{2H}) +\\
&    \tau_1^{4H}((t - \tau_2)^{2H} - (t - \tau_3)^{2H} + (-\tau_2 + \tau_3)^{2H}) + ((-\tau_1 + \tau_2)^{2H})(-(-\tau_1 + \tau_2)^{2H}(t - \tau_3)^{2H}-\\
&    ((t - \tau_1)^{2H} + (t - \tau_2)^{2H} - (-\tau_1 + \tau_2)^{2H})\tau_3^{2H} + (t - \tau_2)^{2H}(-\tau_1 + \tau_3)^{2H} + (t - \tau_1)^{2H}(-\tau_2 + \tau_3)^{2H} + \\
&    t^{2H}((-\tau_1 + \tau_2)^{2H} + 2\tau_3^{2H} - (-\tau_1 + \tau_3)^{2H} - (-\tau_2 + \tau_3)^{2H})) - \tau_2^{2H}((t - \tau_1)^{2H}(-\tau_1 + \tau_2)^{2H} - \\
&    2(-\tau_1 + \tau_2)^{2H}(t - \tau_3)^{2H} + ((t - \tau_1)^{2H} - (t - \tau_2)^{2H} + (-\tau_1 + \tau_2)^{2H})\tau_3^{2H} - \\
&    2(t - \tau_1)^{2H}(-\tau_1 + \tau_3)^{2H} + (t - \tau_2)^{2H}(-\tau_1 + \tau_3)^{2H} + (-\tau_1 + \tau_2)^{2H}(-\tau_1 + \tau_3)^{2H}+\\
&    (t - \tau_1)^{2H}(-\tau_2 + \tau_3)^{2H} + t^{2H}((-\tau_1 + \tau_2)^{2H} + (-\tau_1 + \tau_3)^{2H} - (-\tau_2 + \tau_3)^{2H}))-\\
&    \tau_1^{2H}((t - \tau_2)^{2H}(-\tau_1 + \tau_2)^{2H} -2(-\tau_1 + \tau_2)^{2H}(t - \tau_3)^{2H} - (t - \tau_1)^{2H}\tau_3^{2H} + (t - \tau_2)^{2H}\tau_3^{2H}+\\
&    (-\tau_1 + \tau_2)^{2H}\tau_3^{2H} + (t - \tau_2)^{2H}(-\tau_1 + \tau_3)^{2H} + (t - \tau_1)^{2H}(-\tau_2 + \tau_3)^{2H} - \\
&    2(t - \tau_2)^{2H}(-\tau_2 + \tau_3)^{2H} + (-\tau_1 + \tau_2)^{2H}(-\tau_2 + \tau_3)^{2H} + t^{2H}((-\tau_1 + \tau_2)^{2H} - (-\tau_1 + \tau_3)^{2H}+ \\
&    (-\tau_2 + \tau_3)^{2H}) +\tau_2^{2H}((t - \tau_1)^{2H} + (t - \tau_2)^{2H} - 2(-\tau_1 + \tau_2)^{2H} - 2(t - \tau_3)^{2H} \\
&+ (-\tau_1 + \tau_3)^{2H} +(-\tau_2 + \tau_3)^{2H})).
\end{align*}

Based on Equation \eqref{eq:dis-non}, the aggregate squared error distortion for three samples case is
\begin{align*}
&J(\tau_1, \tau_2, \tau_3) = \int_{0}^T\mathbb{E}(B_t^H)^2dt - \sum_{i = 1}^3\int_{\tau_i}^{\tau_{i+1}} \mathbb{E}(\hat{B}_t^H)^2dt\\
&= \frac{T^{2H+1}}{2H+1} - \int_{\tau_1}^{\tau_{2}}\mathbb{E}\bigg(\frac{B_{\tau_1}^H}{2\tau_1^{2H}}(t^{2H}+\tau_1^{2H}-|\tau_1-t|^{2H})\bigg)^2dt\\
& -\int_{\tau_2}^{\tau_3}\mathbb{E}[C(\tau_1, \tau_2, H)^2[A_1(\tau_1, \tau_2, t, H)B_{\tau_1}^H - A_2(\tau_1, \tau_2, t, H)B_{\tau_2}^H]^2]dt \\
& -\int_{\tau_3}^{T}\mathbb{E}[C(\tau_1, \tau_2, \tau_3, H)^2(A_1(\tau_1, \tau_2, \tau_3, t, H)B_{\tau_1}^H + A_2(\tau_1, \tau_2, \tau_3, t, H)B_{\tau_2}^H  + A_3(\tau_1, \tau_2, \tau_3, t, H)B_{\tau_3}^H)
^2]dt\\
&= \frac{T^{2H+1}}{2H+1} - \frac{1}{4\tau_1^{2H}}\int_{\tau_1}^{\tau_2} (s^{2H}+\tau_1^{2H}-|s-\tau_1|^{2H})^2ds\\
&-C(\tau_1, \tau_2, H)^2\int_{\tau_2}^{\tau_3}[A_1(\tau_1, \tau_2, t, H)^2\tau_1^{2H} + A_2(\tau_1, \tau_2, t, H)^2\tau_2^{2H}\\
& - A_1(\tau_1, \tau_2, t, H)A_2(\tau_1, \tau_2, t, H) (\tau_2^{2H}+\tau_1^{2H}- |\tau_2-\tau_1|^{2H})]dt \\
& -\int_{\tau_3}^{T}\mathbb{E}[C(\tau_1, \tau_2, \tau_3, H)^2(A_1(\tau_1, \tau_2, \tau_3, t, H)B_{\tau_1}^H + A_2(\tau_1, \tau_2, \tau_3, t, H)B_{\tau_2}^H  + A_3(\tau_1, \tau_2, \tau_3, t, H)B_{\tau_3}^H)
^2]dt,
\end{align*}
where
\begin{align*}
&\int_{\tau_3}^{T}\mathbb{E}[ C(\tau_1, \tau_2, \tau_3, H)^2(A_1(\tau_1, \tau_2, \tau_3, t, H)B_{\tau_1}^H + A_2(\tau_1, \tau_2, \tau_3, t, H)B_{\tau_2}^H  + A_3(\tau_1, \tau_2, \tau_3, t, H)B_{\tau_3}^H)^2
]dt\\
=&C(\tau_1, \tau_2, \tau_3, H)^2 \int_{\tau_3}^{T} [A_1(\tau_1, \tau_2, \tau_3, t, H)^2\tau_1^{2H}+A_2(\tau_1, \tau_2, \tau_3, t, H)^2\tau_2^{2H} +A_3(\tau_1, \tau_2, \tau_3, t, H)^{2}\tau_3^{2H}\\
&+A_1(\tau_1, \tau_2, \tau_3, t, H)A_2(\tau_1, \tau_2, \tau_3, t, H)(\tau_1^{2H}+\tau_2^{2H}-(\tau_2-\tau_1)^{2H})\\
&+A_1(\tau_1, \tau_2, \tau_3, t, H)A_3(\tau_1, \tau_2, \tau_3, t, H)(\tau_1^{2H}+\tau_3^{2H}-(\tau_3-\tau_1)^{2H})\\
&+A_2(\tau_1, \tau_2, \tau_3, t, H)A_3(\tau_1, \tau_2, \tau_3, t, H)(\tau_2^{2H}+\tau_3^{2H}-(\tau_3-\tau_2)^{2H})
]dt.  
\end{align*}
\end{proof}

\subsection{Proof of Corollary \ref{prop-level2}}
\label{appendix:prop6}
\begin{proof}
For two samples case, based on equation \eqref{eq:modified-distortion}, the modified distortion is
\begin{align*}
J^*(\eta_1, \eta_2) = \frac{T^{2H+1}}{2H+1} -  \mathbb{E}\bigg[\int_{\tau_1}^{\tau_{2}} (\hat{B}_t^H)^2dt + \int_{\tau_2}^T (\hat{B}_t^H)^2dt \bigg],
\end{align*}
where
\begin{align*}
\int_{\tau_1}^{\tau_2}(\hat{B}_s^H)^2ds &=   \int_{\tau_1}^{\tau_2}(\frac{B_{\tau_1}^H}{2\tau_1^{2H}}(s^{2H}+\tau_1^{2H}-|s-\tau_1|^{2H}))^2ds\\
& = \frac{(B_{\tau_1}^H)^2}{4} \big[(\tau_2-\tau_{1}) + \frac{\tau_2^{4H+1}/\tau_1^{4H}-\tau_1}{4H+1} + \frac{2(\tau_2^{2H+1}-\tau_1^{2H+1})/\tau_1^{2H}}{2H+1} \\
& +\frac{(\tau_2 - \tau_1)^{4H+1}/\tau_1^{4H}}{4H+1} - 2\frac{(\tau_2 - \tau_1)^{2H+1}/\tau_1^{2H}}{2H+1} - 2\tau_1\int_1^{\frac{\tau_2}{\tau_1}}\mu^{2H}(\mu-1)^{2H}d\mu
\big],
\end{align*}
and
\begin{align*}
\int_{\tau_2}^{T}(\hat{B}_s^H)^2ds =&\int_{\tau_2}^T C(\tau_1, \tau_2, H)^2(A_1(\tau_1, \tau_2, t, H)B_{\tau_1}^H - A_2(\tau_1, \tau_2, t, H)B_{\tau_2}^H)^2ds\\
& = C(\tau_1, \tau_2, H)^2\int_{\tau_2}^T [A_1^2(\tau_1, \tau_2, t, H)(B_{\tau_1}^H)^2 +A_2^2(\tau_1, \tau_2, t, H)(B_{\tau_2}^H)^2 \\
&-2 A_1(\tau_1, \tau_2, t, H)A_2(\tau_1, \tau_2, t, H)B_{\tau_1}^HB_{\tau_2}^H]dt\\
& =   C(\tau_1, \tau_2, H)^2\big[(B_{\tau_1}^H)^2 \int_{\tau_2}^T A_1^2(\tau_1, \tau_2, t, H)dt + (B_{\tau_2}^H)^2 \int_{\tau_2}^T A_2^2(\tau_1, \tau_2, t, H)dt\\
& - 2B^H_{\tau_1}B^H_{\tau_2}\int_{\tau_2}^T A_1(\tau_1, \tau_2, t, H)A_2(\tau_1, \tau_2, t, H)dt\big]. 
\end{align*}
These can finish the proof.
\end{proof}

\subsection{Proof of Corollary \ref{prop-level3}}
\label{appendix:prop7}
\begin{proof}
For three samples case, based on equation \eqref{eq:modified-distortion}, the modified distortion is
\begin{align*}
J^*(\eta_1, \eta_2, \eta_3) &= \frac{T^{2H+1}}{2H+1} - \mathbb{E}\sum_{i = 1}^3 \int_{\tau_i}^{\tau_{i+1}} (\hat{B}_t^H)^2dt,
\end{align*}
where $\tau_4 = T$.
The calculation for $\int_{\tau_1}^{\tau_{2}} (\hat{B}_t^H)^2dt+\int_{\tau_2}^{\tau_{3}} (\hat{B}_t^H)^2dt$ is given in Section \ref{appendix:prop6} where we set $T = \tau_3$ in Section \ref{appendix:prop6}.

For $\tau_3 \le t < T$, the estimation is
\begin{align*}
\hat{B}_t^H &= C(\tau_1, \tau_2, \tau_3, H)(A_1(\tau_1, \tau_2, \tau_3, t, H)B_{\tau_1}^H + A_2(\tau_1, \tau_2, \tau_3, t, H)B_{\tau_2}^H  + A_3(\tau_1, \tau_2, \tau_3, t, H)B_{\tau_3}^H),
\end{align*}
where the calculation has been done in Section \ref{appendix:corollary2}.

Then we can obtain:
\begin{align*}
& \int_{\tau_3}^{T} (\hat{B}_t^H)^2dt\\
=& \int_{\tau_3}^{T}[ C(\tau_1, \tau_2, \tau_3, H)^2(A_1(\tau_1, \tau_2, \tau_3, t, H)B_{\tau_1}^H + A_2(\tau_1, \tau_2, \tau_3, t, H)B_{\tau_2}^H  + A_3(\tau_1, \tau_2, \tau_3, t, H)B_{\tau_3}^H)^2
]dt \\
=&C(\tau_1, \tau_2, \tau_3, H)^2 \big[ (B_{\tau_1}^H)^2 \int_{\tau_3}^{T} A_1(\tau_1, \tau_2, \tau_3, t, H)^2 dt + (B_{\tau_2}^H)^2 \int_{\tau_3}^{T} A_2(\tau_1, \tau_2, \tau_3, t, H)^2 dt \\
&+ (B_{\tau_3}^H)^2 \int_{\tau_3}^{T} A_3(\tau_1, \tau_2, \tau_3, t, H)^2 dt + 2B_{\tau_1}^HB_{\tau_2}^H  \int_{\tau_3}^{T} A_1(\tau_1, \tau_2, \tau_3, t, H)A_2(\tau_1, \tau_2, \tau_3, t, H) dt \\
&+ 2B_{\tau_1}^HB_{\tau_3}^H \int_{\tau_3}^{T} A_1(\tau_1, \tau_2, \tau_3, t, H)A_3(\tau_1, \tau_2, \tau_3, t, H) dt  \\
&+ 2B_{\tau_2}^HB_{\tau_3}^H \int_{\tau_3}^{T} A_2(\tau_1, \tau_2, \tau_3, t, H)A_3(\tau_1, \tau_2, \tau_3, t, H) dt \big],
\end{align*}
which can finish the proof.
\end{proof}

\bibliographystyle{ECA_jasa}
\bibliography{cite}
\end{document}